\documentclass[11pt]{article}

\usepackage[T1]{fontenc}
\usepackage[utf8]{inputenc}
\usepackage{amsmath,amssymb,amsthm}
\usepackage{bm}
\usepackage{graphicx}
\usepackage{booktabs,longtable,tabularx,array}
\usepackage{mathrsfs}
\usepackage{newtxtext,newtxmath}
\usepackage[authoryear,round]{natbib}
\usepackage{xcolor}
\usepackage[margin=2.35cm]{geometry}
\usepackage{microtype}
\usepackage[colorlinks=true,linkcolor=blue!50!black,citecolor=blue!50!black,
            urlcolor=blue!50!black]{hyperref}

\newcommand{\khat}{\widehat{\bm k}}
\newcommand{\that}{\widehat{\bm t}}

\newcommand{\A}{\boldsymbol{\mathsf A}}
\newcommand{\I}{\boldsymbol{\mathsf I}}
\newcommand{\tr}{\operatorname{tr}}
\newcommand{\Rea}{\operatorname{Re}}
\newcommand{\Ima}{\operatorname{Im}}
\newcommand{\dd}{\mathrm d}
\newcommand{\e}{\boldsymbol e}

\newcommand{\T}{\mathsf T}
\newtheorem{proposition}{Proposition}[section]
\newtheorem{lemma}{Lemma}[section]
\numberwithin{equation}{section}
\numberwithin{table}{section}
\hypersetup{
  pdftitle={Static compliance and directional instability in indefinite conformation states},
  pdfauthor={Yuan Yu}
}

\title{\bfseries Static compliance and directional instability in indefinite
conformation states}
\author{Yuan Yu\\[0.45em]
\small School of Mathematics and Computational Science, Xiangtan University,
Xiangtan 411105, China\\
\small National Center for Applied Mathematics in Hunan,\\
\small Xiangtan 411105, China\\
\small Hunan Key Laboratory for Computation and Simulation in Science and
Engineering, Xiangtan University,\\
\small Xiangtan 411105, China\\[0.35em]
\small \href{mailto:yuyuan@xtu.edu.cn}{yuyuan@xtu.edu.cn}\\
ORCID: \href{https://orcid.org/0000-0001-5125-2492}{0000-0001-5125-2492}}
\date{}

\begin{document}
\maketitle

\begin{abstract}
A conformation tensor is positive definite for every physically realizable
polymer microstructural state. Numerical discretization can move the
conformation tensor outside the positive-definite domain. This raises a
question: can the least eigenvalue alone identify the first unstable direction?
We answer it by linearizing Oldroyd-B, equilibrium-normalized FENE-P and
Giesekus models about uniform frozen states. All include solvent viscosity and
stress diffusion. We examine every non-zero planar Fourier mode, assuming each
model's uncoupled constitutive tangent is strictly stable. The margin $1+r\chi$
measures the balance between solvent damping and the zero-frequency polymer
response. The complete velocity--conformation system is stable if and only if
this margin is positive. At zero margin, a simple stationary root appears;
finite inertia changes growth rates but not the neutral boundary. At fixed
wavenumber and other parameters, decreasing
$\lambda_1$ identifies the first neutral direction. Oldroyd-B and
equilibrium-normalized FENE-P first become neutral along principal directions;
Giesekus mobility can instead make an oblique direction neutral first. For the
reference case, onset is $\lambda_{1,c}=-1.933$ at $\theta_c=23.94^\circ$,
before the principal-axis prediction. Along this family, the all-direction
threshold approaches $-2.319$ as the other principal stretch grows, whereas the
formal principal-axis extrapolation tends to negative infinity. A
rational-parameter counterexample, matrix spectra and uniform forced-base
calculations test the neutral boundary and both sides. These results concern one
linear, uniform, planar Fourier mode, not nonlinear or inhomogeneous-flow
stability. Within this scope, onset depends not on indefiniteness alone but also
on constitutive-tangent geometry and wavevector direction.
\end{abstract}

\noindent\textbf{Keywords:} viscoelasticity; conformation tensor; directional
instability; static compliance; Giesekus model

\section{Introduction}
\label{sec:intro}

The conformation tensor $\A$ is a second moment of polymer configurations and is therefore symmetric positive definite (SPD) for every realizable microstructural state. Its eigenvalues measure mean-square stretch, so a negative eigenvalue has no corresponding polymer ensemble. Differential Maxwell-type models preserve this constraint in exact evolution \citep{DupretMarchal1986,Hulsen1990}, and matrix-logarithm formulations exploit it to keep computations inside the SPD cone \citep{FattalKupferman2004}. A discretization, however, can produce an indefinite numerical conformation state near poorly resolved stress gradients. Recent simulations illustrate that such formally inadmissible states may coexist with apparently converged observables \citep{CapocciEtAl2026}; that observation motivates, but does not validate, a local analysis of what the continuum symbol does when an outside-cone state is supplied to it. Physical realizability, numerical persistence and spectral response are distinct. We address only the last, while retaining the first as an explicit boundary on interpretation.

For a solvent-free Maxwell liquid, an indefinite state can destroy evolution: the elastic wave speed satisfies $\rho c^2=(\mu_p/\lambda)\khat^{\!\top}\A_0\khat$, so a negative directional stretch produces growth unbounded with wavenumber \citep{Rutkevich1969,JosephRenardySaut1985}. A Newtonian solvent changes the principal part and regularizes this loss of evolution \citep{JosephSaut1990,GuillopeSaut1990}; artificial stress diffusion is a distinct numerical regularization and is not interchangeable with solvent viscosity \citep{SureshkumarBeris1995}. For solvent-regularized Oldroyd-B, the exact frozen roots and their normalized principal-axis threshold are already known \citep{OwensPhillips2002,Chen2014}. Those results show why the inviscid sign test cannot simply be carried into a solution model: solvent damping permits a finite negative directional stretch. We use that result as the linear-spring baseline and make no priority claim for it.

The unresolved directional question begins when the constitutive tangent is not scalar. A least-eigenvalue test or a principal-axis projection can diagnose how far a numerical state has left the SPD cone, but it need not identify the physical wavevector direction that first destabilizes the coupled momentum--conformation symbol. In Giesekus, deformation-dependent mobility couples diagonal and shear perturbations away from a principal axis; equilibrium-normalized FENE-P supplies a useful contrast because its stress and relaxation tangents share a Peterlin factor. The distinction matters for high-Weissenberg-number solvers: a direction-blind alarm cannot distinguish a principal projection from a growing oblique sector. Resolving the full angle also tests a common extrapolation, namely that increasing the companion stretch indefinitely postpones instability along the least principal direction. The present analysis supplies that local spectral distinction, not a tolerance rule, a nonlinear saturation model or a guarantee for an inhomogeneous flow.

The claim must also be separated from adjacent precedents. General constitutive stability constraints for Giesekus in its classical admissible SPD domain are established by \citet{KwonLeonov1995}, and the mobility model itself originates with \citet{Giesekus1982}. \citet{RenardyRenardy1999} found oblique waves in two-layer Giesekus Couette flow, driven by an interfacial normal-stress jump; general shear-flow stability \citep{ConradOberlack2026} and local constitutive-Jacobian alignment \citep{Rezaee2026} concern different base states or different direction spaces. In particular, a direction in conformation-component space is not a physical wavevector angle, and an interfacial mode is not a bulk frozen-symbol mode. To our knowledge, these works do not classify the momentum-coupled, zero-frequency response of a single uniform supplied state outside the SPD cone over physical wavevector angle. Our novelty claim is restricted to that bulk, planar setting and not to oblique Giesekus instability in general.

For planar perturbations of a uniform frozen state, we reduce the complete coupled symbol to a static-compliance criterion and then use it to determine the first neutral direction. Under the registered model-specific tangent-stability conditions, $1+r\chi>0$ is equivalent to spectral stability and the first crossing is stationary (\S\ref{sec:compliance}). Oldroyd-B and equilibrium-normalized FENE-P select a principal direction, whereas Giesekus mobility can select an oblique one: at $\alpha=0.1$, $r=0.5$, $h=0$ and companion eigenvalue $\lambda_2=5$, full-angle onset is $\lambda_{1,c}=-1.933$ at $\theta_c=23.94^\circ$, before the principal value $-2.167$. For the same registered family, the full-angle stationary-neutral threshold approaches the finite limit $-2.319$ as the companion stretch grows (\S\ref{sec:oblique} and figure~\ref{fig:thresholds}). A rational deeper-state counterexample and independent matrix and uniform forced-base calculations provide checks that do not reuse the angular minimization. These results are linear, uniform-state and single-Fourier-mode statements; their three-dimensional and dynamical consequences are treated separately rather than folded into the central claim.
\section{A static-compliance criterion for the complete planar symbol}
\label{sec:compliance}

Directional minimization has spectral meaning only after the complete velocity--conformation symbol has been reduced without discarding a potentially unstable branch. A zero of a projected stationary coefficient would otherwise say nothing about an oscillatory root elsewhere in the spectrum. We therefore first identify the zero-frequency response that controls the coupled planar spectrum and establish when a stationary neutral surface is the first possible crossing. The reduction is performed at fixed non-zero wavenumber and retains inertia, stress diffusion and every constitutive spectator mode; only incompressibility and exact block elimination are used.

Consider an incompressible liquid with solvent viscosity $\mu_s>0$, polymer viscosity $\mu_p>0$, relaxation time $\lambda$, density $\rho$ and stress diffusivity $\kappa\geq0$:
\begin{align}
\rho(\partial_t\bm u+\bm u\!\cdot\!\nabla\bm u)
 &=-\nabla p+\mu_s\Delta\bm u+\nabla\!\cdot\!\bm\tau_p,
 &\nabla\!\cdot\!\bm u&=0,                                           \label{eq:mom}\\
\partial_t\A+\bm u\!\cdot\!\nabla\A
 &=(\nabla\bm u)\A+\A(\nabla\bm u)^{\!\top}
   -\lambda^{-1}\mathcal R(\A)+\kappa\Delta\A .                       \label{eq:conf}
\end{align}
Oldroyd-B uses $\lambda\bm\tau_p/\mu_p=\mathcal R(\A)=\A-\I$; equilibrium-normalized FENE-P uses $\lambda\bm\tau_p/\mu_p=\mathcal R(\A)=f\A-\I$, with $f=(b-2)/(b-\tr\A)$ and $f(2)=1$; Giesekus uses $\lambda\bm\tau_p/\mu_p=\A-\I$ and $\mathcal R(\A)=(\A-\I)+\alpha(\A-\I)^2$. These choices share the same momentum equation but respectively supply a linear tangent, a trace-coupled matched tangent and a quadratic mobility tangent. Let the uniform symmetric frozen state, not necessarily a steady or SPD state, have wave-frame components $a=\khat^{\!\top}\A_0\khat$, $c=\that^{\!\top}\A_0\khat$ and $d=\that^{\!\top}\A_0\that$, where $k\khat$ is a non-zero planar Fourier wavevector and $\that\perp\khat$. Incompressibility leaves one transverse velocity amplitude in this frame; its gradient forces the trace--shear constitutive block, while orthogonal constitutive components remain explicit spectator factors. We use
\begin{equation}
r=\frac{\mu_p}{\mu_s}>0,\qquad h=\kappa\lambda k^2\geq0,\qquad
\delta=\frac{\rho}{\mu_s\lambda k^2}\geq0,\qquad z=\lambda\sigma .     \label{eq:nondim}
\end{equation}
Here $h$ and $\delta$ refer to the specified $k$; no equivalence between diffusion and solvent regularization is implied. The frozen-state construction asks for the local linear response at the supplied $\A_0$ and does not require the unforced relaxation equation to hold there. Consequently, forcing needed to maintain a numerical base state is absent from the perturbation symbol, and the result is neither a global mode nor an inhomogeneous-flow calculation.

Eliminating the conformation amplitudes defines a transfer function $H(z)$ from transverse velocity to polymer shear response and the static compliance $\chi=H(0)$. The following statement includes all finite spectral branches, with stable spectator constitutive modes retained as factors.
\begin{proposition}[Static compliance and stationary onset]
\label{prop:compliance}
For every non-zero planar Fourier mode of the three models above, let $r>0$, $h\geq0$ and finite $\delta\geq0$, and suppose the uncoupled constitutive Fourier block is strictly stable under the model-specific conditions stated below. The complete finite symbol is Hurwitz stable if and only if
\begin{equation}
1+r\chi(\khat;\A_0,h)>0 .                                               \label{eq:mastercriterion}
\end{equation}
Equality gives a simple stationary zero root, whereas a negative margin gives a positive real root. Varying finite $\delta$ at fixed $k,h,r$ does not move the neutral surface; for $\delta=0$, stability refers to the reduced Stokes symbol.
\end{proposition}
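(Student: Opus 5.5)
Linearize \eqref{eq:mom}--\eqref{eq:conf} about the uniform frozen state $\A_0$ at a single planar wavevector $k\khat$. The advective term $\bm u\cdot\nabla\A_0$ vanishes because $\A_0$ is uniform. Incompressibility then forces $\hat{\bm u}=\hat v\,\that$, and pressure elimination leaves a single scalar momentum equation. In the units of \eqref{eq:nondim} it reads
\[
(\delta z+1)\hat v = -\,r\,\mathcal S(\hat{\A}),
\]
where $\mathcal S$ is the $\that\khat$ component of the linearized polymer stress; the factor $k^2/\mu_s$ is scaled out. The velocity gradient $\mathrm i k\,\hat v\,\that\khat^{\!\top}$ enters the conformation equation only through the forcing $\mathrm i k\hat v(\that\khat^{\!\top}\A_0+\A_0\khat\that^{\!\top})$. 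Writing the linearized constitutive operator as $-(\mathsf J+h)$, with $\mathsf J$ the tangent of $\lambda^{-1}\mathcal R$ scaled by $\lambda$, the conformation block is $(z+h)\I+\mathsf J$ acting on $\hat{\A}$. I will therefore write the full symbol as a bordered matrix
\[
\mathsf M(z)=\begin{pmatrix}\delta z+1 & r\,\bm s^{\!\top}\\ -\bm b & z\I+\mathsf J+h\I\end{pmatrix}.
\]
The determinant follows from the Schur complement:
\[
\det\mathsf M(z)=\det(z+\mathsf J+h)\,\bigl[\delta z+1+rH(z)\bigr],\qquad H(z)=\bm s^{\!\top}(z+\mathsf J+h)^{-1}\bm b,\qquad \chi=H(0).
\]
The spectator components (those not reached by $\bm b$ or not read by $\bm s$) factor out of $\det(z+\mathsf J+h)$. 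By the tangent-stability hypothesis, all of these factors are Hurwitz. Hence the finite spectrum outside the stable constitutive roots is exactly the zero set of $D(z)=\delta z+1+rH(z)$, after clearing the denominator of $H$, i.e.\ a polynomial $P(z)=(\delta z+1)Q(z)+rN(z)$ with $Q$ Hurwitz. I will also check the edge case where a zero of $Q$ cancels with $N$. Such cancelled roots are constitutive roots, so they are stable.

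\textbf{Necessity and the stationary root.} $P$ is real, and $P(z)\sim\delta z^{m+1}$ (or $\sim Q$ when $\delta=0$) has positive leading sign as $z\to+\infty$. Moreover, $P(0)=Q(0)(1+r\chi)$ with $Q(0)>0$. If $1+r\chi<0$, the intermediate value theorem gives a positive real root. If $1+r\chi=0$, then $z=0$ is a root. For simplicity I will compute $P'(0)=Q(0)\bigl[\delta+rH'(0)\bigr]$ plus a term that vanishes at zero margin. It therefore suffices to show $\delta+rH'(0)\neq0$, which reduces to showing $H'(0)$ has a definite sign, or more robustly that $D$ is strictly monotone on $[0,\infty)$. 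Neither $\chi$ nor the location of the zero root depends on $\delta$, which gives the $\delta$-invariance of the neutral surface. For $\delta=0$, the momentum equation is algebraic, and the same factorization applies to the reduced Stokes symbol.

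\textbf{Sufficiency (the main obstacle).} I must show that when $1+r\chi>0$ no root of $D$ lies in $\Rea z\ge 0$, including on the imaginary axis, so that no oscillatory crossing occurs. My plan is to establish a positive-real-type structure for $H$ model by model from the explicit $2\times2$ or $3\times3$ trace--shear blocks:
\[
\Rea H(z)\ \ge\ \min\{\chi,\,\cdot\}\quad\text{and}\quad \Rea\bigl[(\delta z+1)\bigr]\ge1\quad\text{for }\Rea z\ge0.
\]
Concretely, I would show that for Oldroyd-B and FENE-P, where $\bm s\propto\bm b$ up to the Peterlin factor, $H(z)-\chi$ is a positive-real function. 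That is, $H(z)=\chi+z\,G(z)$ with $G$ positive real, because the block is symmetrizable with a positive weight once $\mathsf J+h$ is Hurwitz. Then $\Rea D\ge 1+r\chi>0$ on the closed right half-plane. For Giesekus, symmetry fails, and I expect this to be the hard step. There I will fall back on the argument principle. $D$ has no poles in $\Rea z\ge0$, and I need to show that $D(\mathrm i\omega)\neq0$ for all real $\omega$. Separating $\Ima D(\mathrm i\omega)=\delta\omega+r\Ima H(\mathrm i\omega)$ and using the explicit low-degree $N,Q$, I will show that $\Ima H(\mathrm i\omega)$ and $\omega$ have the same sign whenever $\Rea D=0$ could occur. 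A homotopy in $r$ from $r=0$, where $P=(\delta z+1)Q$ is Hurwitz, then shows that roots can only enter the right half-plane through $z=0$, i.e.\ only when $1+r\chi$ changes sign. This yields the equivalence.
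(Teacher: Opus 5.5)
Your skeleton matches the paper's: the Schur-complement factorization with stable spectators, necessity from the sign of $P(0)=Q(0)(1+r\chi)$ against the positive leading coefficient, and the coupling homotopy from $r=0$ at fixed degree. The gap is the step you defer. For Giesekus you state the right target: whenever $\Rea H(\mathrm i\omega)<0$, so that $\Rea D$ could vanish for some coupling $q\le r$, one needs $\omega\,\Ima H(\mathrm i\omega)>0$. But you give no mechanism, and this is the substantive content of the proposition. Stable open-loop poles and low degree do not imply it: $H(z)=1/(z+1)^3$ with $r=8$, $\delta=0$ has roots $\pm\mathrm i\sqrt3$.

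The missing idea is a weighted frequency identity obtained from the partial fractions of $H_G$ over the poles $p_1,g,p_2$. Set $\mathscr N=\Rea H(\mathrm i\omega)$, $\mathscr S=-\Ima H(\mathrm i\omega)/\omega$, $\xi=\cos^2\theta$ and $\Delta=\lambda_2-\lambda_1$. Then
\[
\mathscr N-g\mathscr S=\frac{2\alpha\xi(1-\xi)\Delta^2\{\omega^2+b_G^2-4\alpha^2\lambda_1\lambda_2\}}{(p_1^2+\omega^2)(p_2^2+\omega^2)},
\]
with the middle pole $g$ as the weight. For $\lambda_1<0$ the bracket is positive, because $p_1>0$ forces $b_G>2\alpha|\lambda_1|$. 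Hence $\mathscr N<0$ implies $\mathscr S\le\mathscr N/g<0$, which is exactly your target. For $\lambda_1\geq0$ the bracket can be negative when $\lambda_1\lambda_2$ is large, so your plan also misses a sector split. In that sector the paper instead shows $\Rea H(\mathrm i\omega)\geq0$, via a term-by-term non-negative expansion of its numerator in $\omega^2$ and $\sin^2\theta$.

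The same identity closes the other step you leave open. Simplicity of the zero root needs $H'(0)>0$ at the neutral point specifically. In general $H'(0)$ is not sign-definite (e.g.\ $H_{\rm OB}'(0)=-a/s_0^2$), and monotonicity of $D$ on $[0,\infty)$ is neither shown nor needed. Neutrality can only occur for $\lambda_1<0$, since $\chi\ge0$ otherwise. Letting $\omega\to0$ in the identity with $\mathscr N(0)=\chi=-1/r$ then gives $\mathscr S(0)\leq-1/(rg)<0$, hence $G'(0)=\delta-r\mathscr S(0)>0$. For FENE-P the paper reads simplicity off as $a_1>0$ at $a_0=0$. Your $\delta$-invariance of the neutral surface likewise holds only after oscillatory crossings are excluded, since a Hopf crossing would depend on $\delta$.

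Finally, your justification for Oldroyd-B and FENE-P is wrong as stated. On $(w,\tr\bm E)$ the FENE-P readout $(f_0,g_0c)$ and forcing $(a,2c)$ are not proportional. For Oldroyd-B with $a>0$, $(H-\chi)/z=-a/\{s_0(z+s_0)\}$ is not positive real. What you actually need is $\Rea H(\mathrm i\omega)\geq\min\{\chi,0\}$; the bound $\geq\chi$ fails at large $\omega$ when $\chi>0$. The correct bound does hold for both models by direct computation; for FENE-P the residual numerator is a non-negative multiple of $\omega^2\ell(A_f+B_f)$. Combined with the minimum principle on the right half-plane, it is a legitimate alternative to the paper's quadratic and cubic Routh--Hurwitz checks.
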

Thus $1+r\chi$ is the zero-frequency directional stiffness of the coupled solvent--polymer response, not a realizability test on $\A_0$. Strict stability of the uncoupled constitutive block is essential: for FENE-P it includes the matched positive tangent conditions, and for Giesekus it excludes the constitutive-degeneracy boundary. The proposition is also pointwise in $k$ and direction. It neither compares different wavenumbers nor asserts that a path reaching constitutive degeneracy has first crossed the coupled neutral surface. These qualifications make the stationary statement a complete-symbol result without enlarging it into a general theorem for arbitrary constitutive tangents.

The non-zero-frequency exclusion is a feedback argument. For Giesekus set $b_G=1+h-2\alpha$, $p_i=b_G+2\alpha\lambda_i$, $g=(p_1+p_2)/2$, $P=b_G+2\alpha a$, $Q=b_G+2\alpha d$ and $K=aQ-2\alpha c^2$; the strict constitutive condition is $\alpha\geq0$ and $p_1>0$ for $\lambda_1\leq\lambda_2$, hence $g,p_2>0$. The feedback polynomial and transfer function are
\begin{align}
F_G(z)&=(1+\delta z)(z+p_1)(z+g)(z+p_2)+r(z+P)(az+K),                    \label{eq:giequartic}\\
H_G(z)&=\frac{(z+P)(az+K)}{(z+p_1)(z+g)(z+p_2)} .                       \label{eq:gie_transfer}
\end{align}
Dividing by the stable denominator gives $G(z)=1+\delta z+rH_G(z)$.
For $\lambda_1<0$, write $\xi=\cos^2\theta$, $\Delta=\lambda_2-\lambda_1$, $N=\Re H_G(i\omega)$ and $S=-\Im H_G(i\omega)/\omega$. Direct partial fractions give
\begin{equation}
N-gS=\frac{2\alpha\xi(1-\xi)\Delta^2
 [\omega^2+b_G^2-4\alpha^2\lambda_1\lambda_2]}
 {(p_1^2+\omega^2)(p_2^2+\omega^2)}\geq0 .                             \label{eq:nohopf_identity}
\end{equation}
A Hopf root would require simultaneously $N=-1/r<0$ and $S=\delta/r\geq0$, contradicting $N\geq gS\geq0$. Positivity of the square bracket follows directly: if $\lambda_2\geq0$ then $\lambda_1\lambda_2\leq0$, while if $\lambda_2<0$ the condition $p_1>0$ implies $b_G^2>4\alpha^2\lambda_1\lambda_2$. For $\lambda_1\geq0$, the numerator of $\Re H_G(i\omega)$ instead has a verified non-negative coefficient expansion in $\omega^2$ and $\sin^2\theta$, again excluding the Hopf requirement. Oldroyd-B closes by its quadratic Hurwitz condition; matched FENE-P closes by the cubic Routh determinant after its constant coefficient is written as the positive open-loop product times $1+r\chi_F$. A homotopy in coupling from $0$ to $r$ then cannot cross zero or a non-zero imaginary frequency while $1+q\chi>0$. The polynomial degree and leading coefficient remain fixed separately for $\delta>0$ and the reduced $\delta=0$ problem, so a root cannot change half-plane through infinity. If the margin is negative, the characteristic polynomial changes sign between $z=0$ and the positive real limit and therefore has a positive real root. At equality $G'(0)=\delta-rS(0)>0$, proving that the zero root is simple; the remaining roots are stable limits of the subcritical homotopy. Reduced polynomials or continuous extensions cover $\delta=0$, $\alpha=0$, $c=0$, equal eigenvalues and coincident constitutive poles; $p_1=0$ is excluded. Full residual identities, sector coefficients and degenerate branches are supplied in the appendices.

The three static responses expose the directional contrast. Oldroyd-B gives $\chi_{OB}=a/(1+h)$. For equilibrium-normalized FENE-P, let $f_0>0$, $g_0=f'(T_0)\geq0$, $\ell=f_0+g_0T_0>0$, $A_f=h+f_0$ and $B_f=h+\ell$; exact trace--shear elimination gives
\begin{equation}
\chi_F=\frac{f_0aB_f+2g_0c^2h}{A_fB_f},\qquad
\lambda_{1,c}^{F}=-\frac{h+f_0}{rf_0}.                                 \label{eq:chif}
\end{equation}
The shared non-negative tangent factor keeps the angular minimum principal and recovers $-1/r$ when $h=0$. More generally, this conclusion holds for a two-dimensional stress and relaxation tangent sharing the same scalar factor with $f_0>0$ and $g_0\geq0$; it is not asserted for unmatched closures or a negative derivative. Giesekus instead gives
\begin{equation}
\chi_G=\frac{PK}{p_1gp_2},\qquad K=aQ-2\alpha c^2,                      \label{eq:chig}
\end{equation}
so the wave-frame shear enters at order $\alpha$ and can move the minimum into the angular interval. The denominator also depends on the principal eigenvalues through the three stable tangent rates, making the competition between diagonal stretch and shear coupling genuinely directional. When $\alpha=0$ the shear correction disappears, the coincident spectator poles reduce continuously, and the Oldroyd-B principal selection is recovered. The criterion is therefore common, but the constitutive tangent selects the direction. Section~\ref{sec:oblique} now minimizes these exact responses rather than a principal projection.
\section{Constitutive selection of the first unstable direction}
\label{sec:oblique}

We now minimize the registered compliance over planar wavevector angle. Here ``first'' has a precise pathwise meaning: $|k|$, $\alpha$, $r$, $h$, $\delta$ and $\lambda_2$ are fixed, while $\lambda_1<\lambda_2$ decreases from a stable state, and the first full-angle stationary-neutral contact is recorded. Equivalently, the directional margin $M(\lambda_1)=\min_{0\leq\theta\leq\pi/2}[1+r\chi(\theta)]$ remains positive before contact and reaches zero at onset; an instability is asserted only on crossing to $M<0$. The half-quadrant contains all distinct directions because the frozen state is diagonal in its eigenframe and the compliances are even under reflection. With $\theta$ measured from the $\lambda_1$ eigenvector to $\khat$,
\begin{equation}
a=\lambda_1\cos^2\theta+\lambda_2\sin^2\theta,\qquad
c=(\lambda_2-\lambda_1)\sin\theta\cos\theta .                         \label{eq:angle}
\end{equation}
The Oldroyd-B and equilibrium-normalized FENE-P expressions are minimized at $\theta=0$, so they select the least principal direction. In those controls, rotating away from the least eigenvector replaces negative directional stretch by the larger companion component, and the matched FENE-P shear correction cannot overturn that ordering. The $c^2$ term in $\chi_G$ has the opposite capability: it permits an interior minimum even while both endpoints remain stable. Giesekus mobility can therefore select an oblique first instability under the same complete-symbol criterion. Because $\delta$ is absent from $\chi$, this direction and neutral depth are unchanged by finite inertia at fixed $h$, although the off-neutral eigenvalues and growth rates are not.

For the reference parameters
\begin{equation}
\alpha=0.1,\qquad r=0.5,\qquad h=0,qquad \lambda_2=5,                  \label{eq:referencecase}
\end{equation}
principal-axis analysis gives $\lambda_{1,c}^{\mathrm{principal}}=-13/6=-2.166666667$. Full-angle minimization of the interior quadratic, followed by the neutral condition, instead yields
\begin{equation}
\lambda_{1,c}=\frac{-1624+648\sqrt6}{19}=-1.933191930,qquad
\theta_c=23.938182^\circ .                                             \label{eq:exactonset}
\end{equation}
This is the first stationary-neutral contact as $\lambda_1$ decreases, and both principal directions are still stable there. The selected angle is measured from the $\lambda_1$ eigenvector, not from the shear or transverse direction. Relative to the magnitude of the principal value, $(|\lambda_{1,c}^{\mathrm{principal}}|-|\lambda_{1,c}|)/|\lambda_{1,c}^{\mathrm{principal}}|=10.8\%$: the principal calculation permits that much additional negative depth before signalling onset. This discrepancy is not a small correction to the selected angle; it changes which Fourier direction crosses first. It also persists for every finite $\delta$ because the same zero-frequency margin controls the neutral surface. The comparison concerns the registered fixed-wavenumber path, not a fastest-growing direction across wavenumbers or a global three-dimensional mode. Figure~\ref{fig:directional} places the control-model principal onset and the Giesekus oblique onset in the same $(\theta,\lambda_1)$ plane.

The departure from the endpoint is continuous. By a principal-to-oblique bifurcation we mean that the minimizer of $\chi_G(\theta)$ moves continuously from the endpoint $\theta=0$ into $0<\theta<\pi/2$, rather than switching between sampled angles. Writing the compliance as a quadratic in $s=\sin^2\theta$ makes the geometry explicit: below bifurcation its constrained minimizer is the endpoint $s=0$, at bifurcation the one-sided angular curvature vanishes, and above it the unconstrained minimizer enters $0<s<1$. The endpoint loses minimality at
\begin{equation}
\lambda_{2,\mathrm{bif}}
=\frac{b_G}{2\alpha}\left(\sqrt{1+\frac{r}{\alpha}}-2\right),
\qquad b_G=1+h-2\alpha,                                                 \label{eq:directionbif}
\end{equation}
for the positive-companion branch with $r>3\alpha$. At the reference $\alpha$, $r$ and $h$, $\lambda_{2,\mathrm{bif}}=1.797958971$; above it, the minimizing angle grows from zero and the oblique branch controls the full-angle neutral threshold. At the bifurcation itself the minimizing direction is still principal and its one-sided angular curvature is zero: $\lambda_{2,\mathrm{bif}}$ is a companion-stretch boundary, not a selected non-zero angle. The interior angle develops only after that boundary is crossed. The formula also shows which ingredients are required: $\alpha$ supplies the non-scalar mobility, $r$ transmits it through momentum coupling and $h$ shifts the tangent rates through $b_G$. The word bifurcation here describes the argmin geometry of the stationary-neutral problem, not a time-dependent branching solution or a claim about nonlinear states. Figure~\ref{fig:thresholds}(b) shows this endpoint--interior geometry continuously, rather than as a switch between sampled angles.

An exact deeper-state counterexample separates oblique growth from the marginal minimization used to locate onset. Retain $\alpha=0.1$, $r=0.5$ and $h=0$, take the Stokes limit $\delta=0$, and set $\lambda_1=-2$, $\lambda_2=5$ and $\cos^2\theta=23/28$. Direct substitution into the unreduced wave-frame block shows that both principal spectra are stable, whereas the oblique feedback polynomial is
\begin{equation}
z^3+\frac{117}{40}z^2+\frac{1277}{800}z-\frac{53}{1000}=0,              \label{eq:counterpoly}
\end{equation}
with positive real root $z=0.0313791668$. The negative constant coefficient and positive leading coefficient already force a positive real root; the quoted value is the direct root of the printed cubic. No angular optimizer, interpolation or figure is needed for this sign conclusion. The rational direction is therefore a hand-checkable falsification of the implication ``stable principal directions imply full-angle stability''. Figure~\ref{fig:directional}(b) locates this rational state inside the negative-margin sector; the printed cubic, not the colour field, supplies the positive root. Its angle is not the critical angle in \eqref{eq:exactonset}: the former is a deeper-state witness at $\lambda_1=-2$, whereas $23.938182^\circ$ is the first neutral direction on the registered path. Nor does the counterexample assert that every indefinite Giesekus state has an oblique unstable sector; it establishes existence within the registered family.

\begin{figure}
\centering
\includegraphics[width=\textwidth]{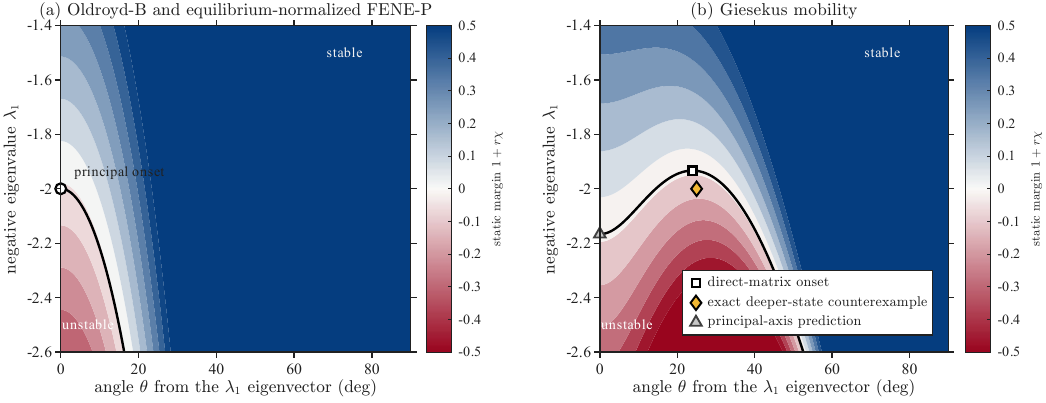}
\caption{Planar static-margin landscapes at $r=0.5$, $h=0$ and $\lambda_2=5$. Colours are dense evaluations of the analytic margin $1+r\chi$; the black contour is neutral. (a) Oldroyd-B and equilibrium-normalized FENE-P first reach the contour on the $\lambda_1$ principal direction. (b) Giesekus mobility raises an interior part of the contour to the direct-matrix onset (square) in \eqref{eq:exactonset}, before the principal-axis prediction (triangle). The diamond marks the exact deeper-state counterexample in \eqref{eq:counterpoly}. These are parameter-space landscapes, not flow-field snapshots or DNS.}
\label{fig:directional}
\end{figure}

The same angular geometry prevents unbounded principal-axis tolerance at large companion stretch. For $r>0$, $h\geq0$ and $0<\alpha<(1+h)/2$, minimize over angle at each finite $\lambda_2$ before taking $\lambda_2\to\infty$. This order matters: taking the principal direction first suppresses the $c^2$ mobility term whose increasing companion stretch drives the interior minimum. Balancing the leading diagonal and shear contributions gives a finite limiting negative eigenvalue and a non-zero limiting angle. The stationary-neutral branch approaches
\begin{equation}
\lambda_{1,c}^{(\infty)}
=\frac{b_G[\alpha-\sqrt{\alpha(\alpha+r)}]}{r\alpha},\qquad
\sin^2\theta_c^{(\infty)}
=-\frac{\alpha\lambda_{1,c}^{(\infty)}}{b_G}.                           \label{eq:saturation}
\end{equation}
At $\alpha=0.1$, $r=0.5$ and $h=0$, these limits are $\lambda_{1,c}^{(\infty)}=-2.319183588$ and $\theta_c^{(\infty)}=32.576263^\circ$. The full-angle threshold thus saturates at a finite oblique value, approached continuously by the finite-$\lambda_2$ neutral branch. By contrast, the principal-axis neutral formula continues to recede without bound and, for this family beyond $\lambda_2=16$, is only a formal extrapolation because it has already passed the constitutive-tangent coercivity boundary. This distinction prevents the mathematical continuation of one projected formula from being read as increasing physical or numerical tolerance. The limiting pair in \eqref{eq:saturation} is tied to the stated $\alpha,r,h$ family and to stationary planar minimization; changing model parameters changes both numbers, and no three-dimensional global selection theorem follows. Figure~\ref{fig:thresholds} shows the finite threshold and angle together, replacing the former table of sampled companion stretches.

Independent calculations check both the printed neutral curve and its stable/unstable sides. The first path assembles the full velocity--conformation matrices directly, without evaluating the closed-form compliance during the eigenvalue solve. Thirty-five such crossings agree with the analytic thresholds to $4.73\times10^{-13}$, and the largest registered neutral-root residual is $6.62\times10^{-14}$. Figure~\ref{fig:thresholds} displays the 15 crossings that match its three plotted $r$ families; the complete table also varies $\alpha$ and $h$. The second path evolves uniformly forced frozen states pseudospectrally at $\lambda_2=5,20,100,1000$. For each companion value, rates are measured at $\lambda_{1,c}+0.05$ on the stable side and $\lambda_{1,c}-0.05$ on the unstable side along the analytic critical angle. All twelve registered runs reproduce the predicted signs; two-sided interpolation differs from the closed-form curve by at most $1.08\times10^{-4}$. Grid, time step and seed were then varied one at a time at four registered anchors, and the largest rate change is below $7.8\times10^{-4}$ relative to the predicted rate, well below the preassigned comparison tolerance. The four open circles in figure~\ref{fig:thresholds}(a) report the interpolation, not extra fitted theory. These calculations verify the printed uniform symbol and its implementation; they are not inhomogeneous-flow DNS, physical validation or evidence of nonlinear activation. Figures~\ref{fig:directional} and \ref{fig:thresholds} separate the analytic selection mechanism from the non-source-identical threshold checks.

\begin{figure}
\centering
\includegraphics[width=\textwidth]{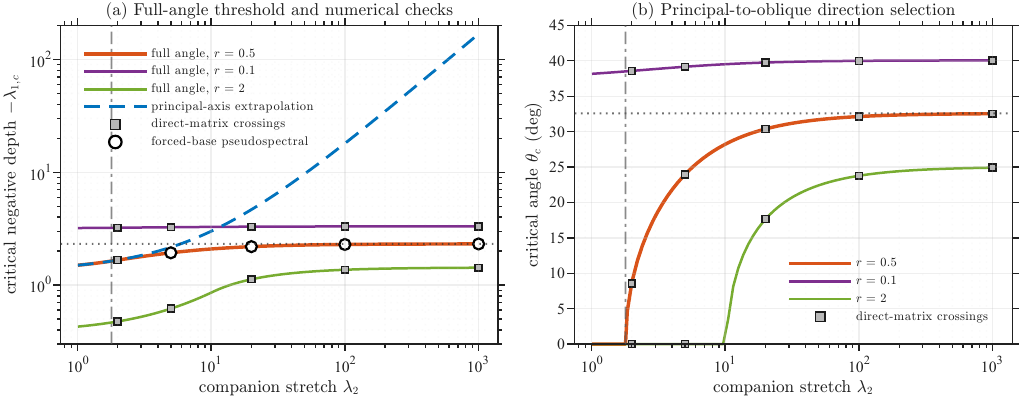}
\caption{Giesekus threshold and direction over the registered companion-stretch families at $\alpha=0.1$ and $h=0$. (a) Full-angle thresholds remain finite, while the formal principal-axis extrapolation recedes. (b) The selected angle leaves the endpoint continuously and approaches the finite limit in \eqref{eq:saturation}. Lines are analytic. Grey squares are 15 independent direct-matrix crossings, five for each plotted $r$ family. The four open circles in (a) are thresholds interpolated from 12 uniform forced-base pseudospectral runs for $r=0.5$; their largest deviation from the analytic curve is $1.08\times10^{-4}$. The vertical line marks \eqref{eq:directionbif}. No markers represent non-uniform-flow DNS.}
\label{fig:thresholds}
\end{figure}

\section{Consequences and dimensional scope}
\label{sec:consequences}
The coupled neutral surface, constitutive-tangent coercivity and nonlinear
relaxation basin mark three different failures.  The neutral surface asks when
solvent--polymer feedback first admits a growing Fourier mode; coercivity asks
whether the uncoupled linearized constitutive block still decays; and the basin
asks whether nonlinear relaxation returns the frozen eigenvalues to equilibrium.
For fixed $\lambda_2\geq0$, $r>0$ and $2\alpha<1+h$, decreasing $\lambda_1$
reaches the coupled Giesekus neutral surface while the constitutive tangent is
still coercive.  The detailed separation and its small-$r$ asymptotics are given
in the appendices.  This ordering does not locate the nonlinear
basin: its position relative to the other two boundaries is not universal.
Consequently, decay of the frozen coupled symbol neither establishes nonlinear
recoverability nor makes an indefinite conformation admissible.
The dimensional extension has two distinct evidential levels.  Exactly, when
the wavevector of a diagonal three-dimensional frozen state is constrained to a
specified eigenplane, the full system contains an invariant block identical to
the corresponding two-dimensional eigenvalue-pair problem.  Permuting the axes
gives the other eigenplanes, so the three-dimensional spectrum contains these
planar branches and cannot be more stable than the planar calculation predicts.
For the reference natural embedding $\lambda_3=1$, the complementary
out-of-plane principal branch crosses exactly at $\lambda_1=-1.5$, before the
planar oblique value $-1.933191930$; the latter must therefore not be reported as
the global onset of that three-dimensional embedding.  Separately, the finite
full-sphere searches performed here were consistent with an eigenplane pairwise
reduction, and two tested natural embeddings had a numerically leading oblique
$e_1$--$e_3$ branch.  These observations show that obliquity is not confined to
the original plane, but they do not prove that every global minimizer lies in an
eigenplane or establish a general three-dimensional pairwise criterion.
These conclusions remain local to a uniform frozen coefficient, linear
perturbations and one non-zero Fourier wavenumber of fixed magnitude.  They do
not demonstrate activation in an inhomogeneous flow, a spatial global mode or
wavepacket, or nonlinear saturation.  In
particular, allowing $\boldsymbol A_0$ to lie outside the positive-definite cone
is a diagnostic device, not a change in the microstructural admissibility of a
conformation tensor.  The analysis states how the registered symbol responds if
such an excursion is interrogated; it neither licenses the excursion as physical
nor prescribes how a numerical flow returns from it.

\section{Conclusions}
\label{sec:conclusions}
For non-zero planar Fourier modes of a uniform frozen state, static compliance
reduces the registered coupled velocity--conformation symbols to a directional
stability test.  Within the strictly stable constitutive sector, the sign of
$1+r\chi$ determines the complete planar spectrum: neutrality is stationary,
and inertia changes the roots without moving the neutral surface.  The resulting
direction is constitutive-model dependent.  Oldroyd-B and
equilibrium-normalized FENE-P retain principal selection, whereas Giesekus
mobility can select an oblique first instability.  At the reference parameters,
full-angle onset occurs at $\lambda_1=-1.933191930$ and
$\theta=23.938182^{\circ}$, before the principal value $-2.166666667$.  For the
registered large-stretch family, the oblique threshold approaches the finite
limit $-2.319183588$ at $32.576263^{\circ}$, while the formal principal-axis
neutral extrapolation has no finite limit.
The practical implication is deliberately narrower than a positivity rule.  A
least eigenvalue records the depth of a conformation excursion, but by itself it
does not identify the wavevector that grows; a directional compliance does.
Likewise, coupled onset, constitutive-tangent coercivity and nonlinear recovery
should be monitored as separate questions.  Three-dimensional eigenplane blocks
guarantee that the planar branches remain in the larger spectrum, while the
present full-sphere searches provide only tested, not theorem-level, information
about global selection.  None of these statements makes an indefinite
conformation microstructurally physical or establishes instability in an
inhomogeneous nonlinear flow.  Within that evidence envelope, the central
conclusion is that the first unstable direction, rather than the least
eigenvalue alone, is the quantity passed from an indefinite frozen state to the
coupled solver.

\medskip
\noindent\textbf{Funding.} This work was supported by the National Natural
Science Foundation of China (No.~12101527), the Natural Science Foundation of
Hunan Province (No.~2026JJ60005), the Science and Technology Innovation Program
of Hunan Province (No.~2026RC3172), and the 111 Project (No.~D23017).
Computational resources were provided by the High Performance Computing Platform
of Xiangtan University.

\smallskip
\noindent\textbf{Declaration of interests.} The author reports no conflict of
interest.

\smallskip
\noindent\textbf{Data availability statement.} The machine-readable data and
scripts supporting the uniform-symbol checks and figure generation are
available from the corresponding author upon reasonable request. Analytic
results follow from the equations in the article and appendices.

\smallskip
\noindent\textbf{Use of AI tools.} On 16 July 2026, the author used the
then-current hosted versions of Claude (Anthropic) and OpenAI Codex for drafting,
editing and symbolic checks.  The author verified all text, derivations and
numerical values and takes full responsibility for the article.

\clearpage
\appendix
These appendices are self-contained at the level of the uniform Fourier symbol.
Appendices~\ref{sup:setup}--\ref{sup:gieproof} prove the complete planar
stationary-onset criterion, including the finite-inertia and Stokes branches,
simple neutrality, the coupling homotopy and all registered degeneracies.
Appendix~\ref{sup:direction} gives the directional minimization and the finite
large-companion limit.  Appendices~\ref{sup:boundaries} and \ref{sup:gain}
separate recovery boundaries from time-dependent amplification.
Appendix~\ref{sup:threeD} distinguishes an exact eigenplane-restricted
three-dimensional substructure from finite full-sphere numerical searches.
The last section records verification tolerances and, equally importantly,
what those checks do not establish.

\section{Planar Fourier reduction and scope of the criterion}
\label{sup:setup}

Let $\A_0$ be a spatially uniform, symmetric frozen state.  For a non-zero
two-dimensional Fourier mode, write
\[
(\delta\bm u,\delta\A,\delta p)
 =(\bm v,\bm E,\pi)\exp(i\bm k\boldsymbol\cdot\bm x+\sigma t),
\qquad k=|\bm k|>0.
\]
Set $\bm k=k\khat$ and choose $\that\perp\khat$.  Incompressibility leaves
$\bm v=v\that$.  In the wave frame,
\begin{equation}
\A_0=\begin{pmatrix}a&c\\c&d\end{pmatrix},\qquad
\bm E=\begin{pmatrix}x&w\\w&y\end{pmatrix}.                               \label{eq:Srot}
\end{equation}
The stretching source and the only transverse stress component seen by
momentum are
\begin{equation}
(\nabla\delta\bm u)\A_0+\A_0(\nabla\delta\bm u)^\T
=ikv\begin{pmatrix}0&a\\a&2c\end{pmatrix},
\qquad \that\boldsymbol\cdot\delta\bm\tau\,\khat .                         \label{eq:Sstretch}
\end{equation}
Pressure absorbs the longitudinal force.

Use the dimensionless quantities
\begin{equation}
z=\lambda\sigma,\qquad
h=\kappa\lambda k^2\geq0,\qquad
\delta=\frac{\rho}{\mu_s\lambda k^2}\geq0,\qquad
r=\frac{\mu_p}{\mu_s}>0,\qquad U=i\lambda kv .                            \label{eq:Sscales}
\end{equation}
For each registered model, elimination of the stable, uncoupled
constitutive spectators gives
\begin{equation}
G(z)=1+\delta z+rH(z)=0,\qquad \chi=H(0),                                 \label{eq:Sfeedback}
\end{equation}
where $H$ is the constitutive shear response to unit $U$.  At
$\delta=0$, the momentum equation is an algebraic Stokes constraint; the
finite spectrum is defined only after that constraint has been eliminated.

\begin{proposition}[complete planar symbol at fixed wavenumber]
\label{prop:Scriterion}
Fix $k>0$, $r>0$, $h\geq0$ and finite $\delta\geq0$.  Assume that the
uncoupled constitutive Fourier block is strictly stable.  For the
equilibrium-normalized FENE-P convention, also assume the matched
stress--relaxation tangent described in \S\ref{sup:fene}, with
$f_0>0$, $g_0\geq0$ and $\ell=f_0+g_0T_0>0$.  For Giesekus, assume
$\alpha\geq0$ and
\[
p_1=1+h-2\alpha+2\alpha\lambda_1>0,
\qquad \lambda_1\leq\lambda_2.
\]
Then, for every fixed wave direction,
\begin{enumerate}
\item the complete finite velocity--conformation spectrum is strictly stable
      if and only if $1+r\chi>0$;
\item $1+r\chi=0$ gives one simple root $z=0$, while all other finite roots
      remain in $\Rea z<0$;
\item $1+r\chi<0$ gives at least one positive real root.
\end{enumerate}
Consequently no non-zero-frequency crossing precedes the registered neutral
surface, and changing finite inertia $\delta$ at fixed $k,h,r$ does not move
that surface.
\end{proposition}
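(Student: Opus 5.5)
The plan is to reduce the complete finite symbol, model by model, to the scalar feedback equation \eqref{eq:Sfeedback}, and then argue by a homotopy in the coupling strength.

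\textbf{Block factorization.} First I would write the linearized wave-frame system for $(v,x,w,y)$ together with any spectator components. The spectators are decoupled from $v$ because the stretching source \eqref{eq:Sstretch} feeds only the trace--shear block. They appear as factors $(z+\text{rate})$ with positive rates, which is guaranteed by strict constitutive stability, so they never contribute unstable roots. Multiplying $G(z)$ by the denominator $D(z)$ of $H$ gives the characteristic polynomials:
\begin{itemize}
\item for Giesekus, $F_G$ in \eqref{eq:giequartic};
\item for Oldroyd-B, the quadratic $(1+\delta z)(z+1+h)+ra$;
\item for FENE-P, the cubic obtained by trace--shear elimination.
\end{itemize}
I would check two structural facts. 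The leading coefficient is $\delta$ when $\delta>0$, and $1$ in the reduced Stokes problem. The constant term is $D(0)(1+r\chi)$ with $D(0)>0$.

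\textbf{Case $1+r\chi<0$.} This gives item 3 at once. The polynomial $F(z)=D(z)G(z)$ has $F(0)<0$, while $F(z)\to+\infty$ as $z\to+\infty$ along the reals. The intermediate value theorem therefore yields a positive real root.

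\textbf{Case $1+r\chi\ge 0$.} Here I use the homotopy $q\in[0,r]$ applied to $F_q=D(1+\delta z)+qN$.
\begin{itemize}
\item At $q=0$ all roots are stable: they are $-1/\delta$ and the constitutive poles.
\item The degree and leading coefficient do not depend on $q$, so no root can escape through infinity.
\item Since $\chi$ does not depend on $q$ and $1+q\chi$ is affine in $q$, $1+r\chi>0$ implies $1+q\chi>0$ for all $q\in[0,r]$. Hence $z=0$ is never a root along the path.
\end{itemize}
It remains to rule out roots $z=i\omega$ with $\omega\neq0$. Such a root requires $\Rea H(i\omega)=-1/q$ and $-\Ima H(i\omega)/\omega=\delta/q\ge0$.
\begin{itemize}
\item For Oldroyd-B I would simply invoke the quadratic Hurwitz conditions.
\item For FENE-P I would compute the cubic Routh determinant. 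The goal is a sum of positive terms once the constant coefficient is written as the open-loop product times $1+q\chi_F$. I would try to organize it using $f_0>0$, $g_0\ge0$ and $\ell>0$.
\item For Giesekus I would use the identity \eqref{eq:nohopf_identity}, $N-gS\ge0$, obtained by partial fractions of $H_G$ over the poles $p_1,g,p_2$. The argument then splits as in the text. When $\lambda_1<0$, positivity of the bracket follows from $p_1>0$. When $\lambda_1\ge0$, I would show that the numerator of $\Rea H_G(i\omega)$ is a polynomial in $\omega^2$ and $\sin^2\theta$ with nonnegative coefficients, so $N\ge0$ directly.
\end{itemize}

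\textbf{Main obstacle.} I expect the hardest step to be the Giesekus no-Hopf estimate, and in particular the nonnegative coefficient expansion for $\lambda_1\ge0$. There I would expand in $\xi=\cos^2\theta$ and group terms by powers of $\alpha$ and $\Delta$.

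\textbf{Boundary case $1+r\chi=0$.} For item 2 and the neutral boundary, the root at $z=0$ is simple because
\[
G'(0)=\delta+rH'(0)=\delta-rS(0).
\]
I would show $S(0)<0$, or more precisely $-rS(0)>0$, from the same identity at $\omega=0$, so that $G'(0)>0$. The other roots are limits of the stable roots from the subcritical side. By the no-Hopf argument they cannot lie on the imaginary axis, so they stay in $\Rea z<0$.

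\textbf{Degenerate configurations and inertia.} The cases $\delta=0$, $\alpha=0$, $c=0$, equal eigenvalues and coincident poles are handled by continuity, using the reduced polynomials. Finally, since $\chi$ does not involve $\delta$, the neutral surface does not move when finite inertia changes.
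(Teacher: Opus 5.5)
Your proposal is correct and follows the paper's proof essentially step for step: spectator factorization, the intermediate-value argument on the positive real axis for a negative margin, and the coupling homotopy $q\in[0,r]$. In that homotopy the no-Hopf step is closed by the Oldroyd-B quadratic Hurwitz test and the FENE-P cubic Routh determinant. For Giesekus it uses the identity $\mathscr N-g\mathscr S\geq0$ when $\lambda_1<0$ and the non-negative expansion of $\Rea H_G(i\omega)$ when $\lambda_1\geq0$, with simplicity of the neutral root from $G'(0)=\delta-r\mathscr S(0)>0$.

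One small point: your $\omega\to0$ simplicity argument is Giesekus-specific. It also relies on neutrality occurring only for $\lambda_1<0$, since otherwise $H_G(0)\geq0$. For Oldroyd-B and FENE-P you should check simplicity directly, as the paper does via $F_{\rm OB}'(0)=1+\delta s_0>0$ and via $a_1>0$ at $a_0=0$.
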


The word first requires a separate path convention.  At fixed
$k,\alpha,r,h,\delta,\lambda_2$, consider a continuous decrease of
$\lambda_1$ that remains inside the strict constitutive-stability open set,
and define
\[
\mathcal M(\lambda_1)=\min_{\theta}\{1+r\chi(\lambda_1,\theta)\}.
\]
The first value at which $\mathcal M=0$ is the first neutral contact; it is
called a first instability only if the path continues into
$\mathcal M<0$.  An angular tangency, simultaneous directions, or a path
that reaches $p_1=0$ first must be reported separately.

\section{Oldroyd-B and equilibrium-normalized FENE-P}
\label{sup:obfene}

\subsection{Oldroyd-B}

Only $v$ and $w$ feed back:
\begin{equation}
(\rho\sigma+\mu_sk^2)v=\frac{\mu_p}{\lambda}ikw,\qquad
(\sigma+\lambda^{-1}+\kappa k^2)w=ikav .                                 \label{eq:Sobpair}
\end{equation}
With $s_0=1+h>0$, elimination gives
\begin{equation}
F_{\rm OB}(z)=(1+\delta z)(z+s_0)+ra=0,\qquad
H_{\rm OB}(z)=\frac{a}{z+s_0},\qquad
\chi_{\rm OB}=\frac{a}{s_0}.                                             \label{eq:Sobpoly}
\end{equation}
For $\delta>0$,
\[
F_{\rm OB}(z)=\delta z^2+(1+\delta s_0)z+s_0+ra .
\]
The first two coefficients are positive, so the quadratic is Hurwitz exactly
when $s_0+ra=s_0(1+r\chi_{\rm OB})>0$.  For $\delta=0$ the reduced Stokes
polynomial is the corresponding first-degree expression.  A negative
constant term gives a positive real root by continuity on the positive real
axis.  At neutrality,
\begin{equation}
F'_{\rm OB}(0)=1+\delta s_0>0,                                           \label{eq:Sobsimple}
\end{equation}
so the zero root is simple.  The remaining planar conformation component is
a stable triangular spectator with pole $-s_0$.

\subsection{Matched FENE-P trace--shear closure}
\label{sup:fene}

Let $T=\tr\A$ and consider a scalar spring factor $f(T)$.  The tangent of
$f\A$ is
\begin{equation}
D(f\A)_{\A_0}[\bm E]
=f_0\bm E+g_0(\tr\bm E)\A_0,\qquad
f_0=f(T_0),\quad g_0=f'(T_0).                                           \label{eq:Sftangent}
\end{equation}
The result below uses the same factor in the polymer stress and relaxation
tangent.  It is therefore a matched stress--relaxation statement, not a
claim for an arbitrary scalar spring law.  For the equilibrium-normalized
two-dimensional FENE-P convention
\[
f(T)=\frac{b-2}{b-T},\qquad T_0<b,
\]
one has $f_0>0$, $g_0=(b-2)/(b-T_0)^2>0$ and
$\ell=f_0+g_0T_0>0$.

Let $s=x+y=\tr\bm E$.  The momentum, shear and trace equations close exactly:
\begin{align}
(1+\delta z)U&=-r(f_0w+g_0cs),                                         \label{eq:SfeneU}\\
(z+h+f_0)w+g_0cs&=aU,                                                   \label{eq:Sfenew}\\
(z+h+\ell)s&=2cU.                                                       \label{eq:Sfenes}
\end{align}
No omitted planar conformation component feeds back into these variables.
Writing
\[
A_f=h+f_0>0,\qquad B_f=h+\ell>0,
\]
gives
\begin{equation}
F_F(z)=(1+\delta z)(z+A_f)(z+B_f)
+r\{f_0a(z+B_f)+2g_0c^2(z+h)\}=0.                                     \label{eq:Sfenepoly}
\end{equation}
At zero frequency,
\begin{equation}
\chi_F=\frac{f_0a}{A_f}+\frac{2g_0c^2h}{A_fB_f},\qquad
F_F(0)=A_fB_f(1+r\chi_F).                                               \label{eq:Sfenechi}
\end{equation}

For $\delta>0$, write $F_F=a_3z^3+a_2z^2+a_1z+a_0$, where
\begin{align}
a_3&=\delta,\qquad
a_2=1+\delta(A_f+B_f),                                                   \label{eq:Sfenecoefs1}\\
a_1&=A_f+B_f+\delta A_fB_f+r(f_0a+2g_0c^2),                              \label{eq:Sfenecoefs2}\\
a_0&=A_fB_f+r(f_0aB_f+2g_0c^2h).                                       \label{eq:Sfenecoefs3}
\end{align}
If $a_0>0$, then
\[
rf_0a>-A_f-\frac{2rg_0c^2h}{B_f},
\]
and hence
\begin{align}
a_1&>B_f+\delta A_fB_f+\frac{2rg_0c^2\ell}{B_f}>0,                       \label{eq:Sfenea1}\\
(A_f+B_f)a_1-a_0
&>A_fB_f+B_f^2+\delta A_fB_f(A_f+B_f)
+2rg_0c^2\ell\left(1+\frac{A_f}{B_f}\right)>0.                          \label{eq:Sfenedelta}
\end{align}
It follows that
\[
a_2a_1-a_3a_0
=a_1+\delta\{(A_f+B_f)a_1-a_0\}>0,
\]
so all cubic Routh--Hurwitz conditions hold.  When $\delta=0$,
\eqref{eq:Sfenepoly} is reduced before analysis to a quadratic; the same
bounds make both non-leading coefficients positive.  If $a_0<0$, a positive
real root exists.  At $a_0=0$, substitution of the equality gives
\begin{equation}
a_1=B_f+\delta A_fB_f+\frac{2rg_0c^2\ell}{B_f}>0,                        \label{eq:Sfenesimple}
\end{equation}
so the zero root is simple and the remaining quadratic factor is stable.
The trace--shear orthogonal spectator has pole $-A_f$.  The proof fails
without the shared tangent or when the sign conditions on $f_0,g_0,B_f$ are
removed.

For later directional comparison, let
$\lambda_1\leq\lambda_2$, $\Delta=\lambda_2-\lambda_1$, and
$x_\theta=\sin^2\theta$, with $\theta$ measured from the
$\lambda_1$ eigenvector.  Then
\begin{equation}
a=\lambda_1+\Delta x_\theta,\qquad
c^2=\Delta^2x_\theta(1-x_\theta),                                       \label{eq:Sangle}
\end{equation}
and
\begin{equation}
\chi_F(x_\theta)=\frac{f_0}{A_f}(\lambda_1+\Delta x_\theta)
+\frac{2g_0h\Delta^2}{A_fB_f}x_\theta(1-x_\theta).                       \label{eq:Sfeneangle}
\end{equation}
This is concave on $[0,1]$.  Its minimum is at an endpoint, and the
$x_\theta=0$ endpoint is no larger because $f_0\Delta/A_f\geq0$.
Thus the matched FENE-P problem retains principal-axis selection.

\section{Giesekus matrix, transfer function and Stokes reduction}
\label{sup:giematrix}

The Giesekus relaxation tangent, including the Fourier diffusion shift, is
\begin{equation}
\mathcal L[\bm E]
=(1+h)\bm E+\alpha\{(\A_0-\I)\bm E+\bm E(\A_0-\I)\}.                    \label{eq:Sgietangent}
\end{equation}
Define
\begin{equation}
b_G=1+h-2\alpha,\quad
P=b_G+2\alpha a,\quad Q=b_G+2\alpha d,\quad
g=b_G+\alpha(a+d),\quad
p_i=b_G+2\alpha\lambda_i.                                               \label{eq:Sgiedefs}
\end{equation}
The planar equations are
\begin{align}
zx&=-Px-2\alpha cw,                                                       \label{eq:Sgiex}\\
zw&=aU-\alpha cx-gw-\alpha cy,                                           \label{eq:Sgiew}\\
zy&=2cU-2\alpha cw-Qy,                                                    \label{eq:Sgiey}\\
(1+\delta z)U&=-rw.                                                       \label{eq:Sgiemom}
\end{align}
For $\delta>0$, the matrix for $(U,x,w,y)^\T$ is
\begin{equation}
\bm M_\delta=
\begin{pmatrix}
-\delta^{-1}&0&-r\delta^{-1}&0\\
0&-P&-2\alpha c&0\\
a&-\alpha c&-g&-\alpha c\\
2c&0&-2\alpha c&-Q
\end{pmatrix}.                                                           \label{eq:Sgiematrix}
\end{equation}
Rotational invariance gives
\begin{equation}
P+Q=p_1+p_2=2g,\qquad
PQ-4\alpha^2c^2=p_1p_2.                                                  \label{eq:Sgieinvariants}
\end{equation}
Eliminating $x$ and $y$ yields
\begin{equation}
\frac{w}{U}
=H(z)
=\frac{(z+P)(az+K)}
{(z+p_1)(z+g)(z+p_2)},\qquad
K=aQ-2\alpha c^2,                                                        \label{eq:Sgietransfer}
\end{equation}
and therefore
\begin{equation}
F_G(z)=(1+\delta z)(z+p_1)(z+g)(z+p_2)
+r(z+P)(az+K)=0.                                                        \label{eq:Sgiepoly}
\end{equation}
In particular,
\begin{equation}
\chi_G=\frac{PK}{p_1gp_2},\qquad
F_G(0)=p_1gp_2(1+r\chi_G).                                               \label{eq:Sgiechi}
\end{equation}
For $\alpha\geq0$ and $\lambda_1\leq\lambda_2$, $p_1>0$ implies
$g,p_2>0$ and is equivalent to strict stability of the three uncoupled
constitutive poles.

The Stokes case is not obtained by retaining the singular first row of
\eqref{eq:Sgiematrix}.  Set $\delta=0$, impose $U=-rw$, and reduce to
$X=(x,w,y)^\T$:
\begin{equation}
\frac{\dd X}{\dd(t/\lambda)}
=\bm M_0X,\qquad
\bm M_0=
\begin{pmatrix}
-P&-2\alpha c&0\\
-\alpha c&-(g+ra)&-\alpha c\\
0&-2c(\alpha+r)&-Q
\end{pmatrix}.                                                          \label{eq:Sgiestokesmatrix}
\end{equation}
Direct expansion gives
\[
\det(z\I-\bm M_0)
=(z+p_1)(z+g)(z+p_2)+r(z+P)(az+K),
\]
the cubic obtained from \eqref{eq:Sgiepoly} after the algebraic velocity
constraint has been removed.

\section{Giesekus exclusion of a prior Hopf crossing}
\label{sup:gieproof}

Stable open-loop poles alone do not exclude a feedback Hopf crossing.  For
example, $H(z)=1/(z+1)^3$, $r=8$ and $\delta=0$ give
$(z+1)^3+8=0$ with roots $\pm i\sqrt3$.  The following model-specific
frequency identity is therefore essential.

\subsection{The sector \texorpdfstring{$\lambda_1<0$}{lambda1 < 0}}

Let $\xi=\cos^2\theta$ and $\Delta=\lambda_2-\lambda_1$.  For distinct
poles, the transfer function has the partial fractions
\begin{equation}
H(z)=\frac{R_1}{z+p_1}+\frac{R_g}{z+g}+\frac{R_2}{z+p_2},                 \label{eq:Spartial}
\end{equation}
where
\begin{equation}
R_1=2\lambda_1\xi(1-\xi),\quad
R_g=(2\xi-1)\{\lambda_1\xi-\lambda_2(1-\xi)\},\quad
R_2=2\lambda_2\xi(1-\xi).                                               \label{eq:Sresidues}
\end{equation}
For $\omega\neq0$, define
\begin{equation}
\mathscr S(\omega)=-\frac{\Ima H(i\omega)}{\omega}
=\sum_{j\in\{1,g,2\}}\frac{R_j}{p_j^2+\omega^2},\qquad
\mathscr N(\omega)=\Rea H(i\omega)
=\sum_{j\in\{1,g,2\}}\frac{R_jp_j}{p_j^2+\omega^2},                     \label{eq:SNS}
\end{equation}
where $p_g=g$.  Cross-multiplication gives the identity
\begin{equation}
\boxed{\;
\mathscr N-g\mathscr S
=\frac{2\alpha\xi(1-\xi)\Delta^2
\{\omega^2+b_G^2-4\alpha^2\lambda_1\lambda_2\}}
{(p_1^2+\omega^2)(p_2^2+\omega^2)}
\geq0 .\;}                                                              \label{eq:Sidentity}
\end{equation}
The bracket is strictly positive.  If $\lambda_2\geq0$, then
$\lambda_1\lambda_2\leq0$.  If $\lambda_2<0$, $p_1>0$ gives
$b_G>-2\alpha\lambda_1$, and hence
$b_G^2>4\alpha^2\lambda_1^2\geq
4\alpha^2\lambda_1\lambda_2$.

At a hypothetical root $z=i\omega$, $\omega\neq0$, the real and imaginary
parts of \eqref{eq:Sfeedback} require
\begin{equation}
\mathscr N=-\frac1r<0,\qquad
\mathscr S=\frac{\delta}{r}\geq0.                                       \label{eq:Shopfreq}
\end{equation}
These conditions contradict
$\mathscr N\geq g\mathscr S\geq0$.

\subsection{The sector \texorpdfstring{$\lambda_1\geq0$}{lambda1 >= 0}}

Put
\[
L=\lambda_1\geq0,\qquad D=\lambda_2-\lambda_1\geq0,\qquad
s=\sin^2\theta,\qquad p=p_1>0,\qquad X=\omega^2.
\]
Exact collection gives
\begin{equation}
\Rea H(i\omega)
=\frac{C_2X^2+C_1X+C_0}
{(p_1^2+X)(g^2+X)(p_2^2+X)},                                            \label{eq:SSPDreal}
\end{equation}
with
\begin{align}
C_2={}&L(p+\alpha D)+Dps+D^2\alpha s(3-2s),                              \label{eq:SC2}\\
C_1=2\{&
2D^4\alpha^3s^2
+2D^3L\alpha^3(3s^2-3s+1)
+D^3\alpha^2p(3s^2+s)\nonumber\\
&+2D^2L\alpha^2p(3s^2-3s+2)
+3D^2\alpha p^2s+3DL\alpha p^2+D p^3s+Lp^3\},                            \label{eq:SC1}\\
C_0={}&p(p+\alpha D)(p+2\alpha D)(p+2\alpha Ds)
\{2DL\alpha(1-s)+Dps+Lp\}.                                               \label{eq:SC0}
\end{align}
Every term is non-negative on $s\in[0,1]$; in particular
$3s^2-3s+1>0$ and $3s^2-3s+2>0$.  Thus
$\Rea H(i\omega)\geq0$, contradicting the Hopf requirement
$\Rea H=-1/r$.  This sector also cannot be neutral at zero frequency because
$H(0)\geq0$.

\subsection{Coupling homotopy, simple neutrality and degeneracies}

Replace $r$ by a homotopy parameter $q\in[0,r]$.  At $q=0$, all
constitutive poles and, for $\delta>0$, the inertial pole
$-1/\delta$ are strictly stable.  If $1+r\chi>0$, then
$1+q\chi>0$ throughout the homotopy: this is immediate for $\chi\geq0$,
and for $\chi<0$ it follows from
$1+q\chi\geq1+r\chi$.  Hence no zero root occurs.  The preceding frequency
argument, with $r$ replaced by $q$, excludes every non-zero imaginary root.
The degree and leading coefficient remain fixed separately on the
$\delta>0$ quartic and the $\delta=0$ reduced cubic, so no root changes
half-plane through infinity.  This proves sufficiency.  Necessity follows
because $F_G(0)<0$ when $1+r\chi<0$, whereas
$F_G(x)\to+\infty$ as $x\to+\infty$, giving a positive real root.

At equality, neutrality lies in the $\lambda_1<0$ sector.  Taking
$\omega\to0$ in \eqref{eq:Sidentity} and using
$\mathscr N(0)=H(0)=-1/r$ gives
\[
\mathscr S(0)\leq-\frac{1}{rg}<0.
\]
Since $H'(0)=-\mathscr S(0)$,
\begin{equation}
G'(0)=\delta+rH'(0)=\delta-r\mathscr S(0)>0,                              \label{eq:Sgiesimple}
\end{equation}
so the zero root is simple.  The other roots are limits of the stable
$q<r$ homotopy and remain strictly in the left half-plane.

The partial-fraction display is only a convenient representation.
Equation~\eqref{eq:Sidentity} is a cross-multiplied polynomial identity, so
coincident constitutive poles and the endpoints $\xi=0,1$ follow by
continuous extension or direct substitution.  If $\alpha=0$, the feedback
factor is the Oldroyd-B quadratic with stable repeated spectators.  If
$c=0$, $\lambda_1=\lambda_2$, or $\delta=0$, the corresponding reduced
equations above give the same result without division by a vanishing
quantity.  The surface $p_1=0$ is excluded: it is the constitutive
degeneracy boundary, not part of the strict open set in
Proposition~\ref{prop:Scriterion}.

\section{Directional selection and the finite large-companion limit}
\label{sup:direction}

Let $x=\sin^2\theta$ and $\Delta=\lambda_2-\lambda_1$.  For Giesekus,
\begin{equation}
P=p_1+2\alpha\Delta x,\qquad
Q=p_2-2\alpha\Delta x,\qquad
c^2=\Delta^2x(1-x).                                                      \label{eq:Sgieangle1}
\end{equation}
The numerator contains the cancellation
\begin{equation}
K=aQ-2\alpha c^2
=\lambda_1p_2+b_G\Delta x.                                               \label{eq:SgieKsimple}
\end{equation}
Because $p_1gp_2$ is independent of direction, minimizing $\chi_G$ is
equivalent to minimizing the quadratic
\begin{equation}
PK=(p_1+2\alpha\Delta x)
(\lambda_1p_2+b_G\Delta x).                                              \label{eq:Sgiequad}
\end{equation}
Inside the indefinite constitutive-stable sector, $p_1>0$ forces $b_G>0$,
so the quadratic is convex.  Its interior stationary point is
\begin{equation}
x_*=-\frac{2\alpha\lambda_1p_2+b_Gp_1}
{4\alpha b_G\Delta},                                                     \label{eq:Sxstar}
\end{equation}
when this value lies in $(0,1)$; otherwise an endpoint controls.

Setting $x_*=0$ simultaneously with the endpoint neutral condition gives
the endpoint-to-interior transition
\begin{equation}
\lambda_{2,\mathrm{bif}}
=\frac{b_G}{2\alpha}
\left(\sqrt{1+\frac r\alpha}-2\right).                                   \label{eq:Sdirectionbif}
\end{equation}
Here bifurcation means a continuous change in the minimizing argument from
the endpoint into the open interval; it is not a dynamical Hopf
bifurcation.  When $r>3\alpha$ the transition occurs at positive companion
stretch.  If $r\leq3\alpha$, its formal value is non-positive and the
positive-companion branch is already interior-controlled.

For $\alpha=0.1$, $r=0.5$, $h=0$ and $\lambda_2=5$, substituting the
stationarity relation into $F_G(0)=0$ gives
\begin{equation}
19\lambda_1^2+3248\lambda_1+6208=0.                                     \label{eq:Squadratic}
\end{equation}
The first root in the constitutive-stable sector is
\begin{equation}
\lambda_{1,c}=\frac{-1624+648\sqrt6}{19}
=-1.933191930\ldots,\qquad
\theta_c=23.938182^\circ.                                                \label{eq:Sexactcritical}
\end{equation}
The principal-axis value is $-13/6=-2.166666667\ldots$, which would allow
about $10.8\%$ more negative depth under the registered comparison
$\{|\lambda_{1,c}^{\rm principal}|-|\lambda_{1,c}|\}/
|\lambda_{1,c}^{\rm principal}|$.

The deeper state $\lambda_1=-2$, $\lambda_2=5$ admits an exact rational
check at $\cos^2\theta=23/28$:
\begin{align}
a&=-\frac34,\quad d=\frac{15}{4},\quad
c=\frac{\sqrt{115}}4,\quad
p_1=\frac25,\quad g=\frac{11}{10},\quad p_2=\frac95,                    \label{eq:Scounter1}\\
P&=\frac{13}{20},\quad Q=\frac{31}{20},\quad
K=-\frac{13}{5},\quad
F_G(0)=-\frac{53}{1000}.                                                 \label{eq:Scounter2}
\end{align}
Thus a positive real root exists although both principal spectra remain
stable.  The Stokes positive root is
$z=0.0313791668\ldots$; this deep-state angle
$24.997400^\circ$ is not the onset angle in
\eqref{eq:Sexactcritical}.

\subsection{Large companion stretch}

Fix $r>0$, $h\geq0$ and
$0<\alpha<(1+h)/2$, so $b_G>0$.  At each finite
$\lambda_2$, first minimize \eqref{eq:Sgiequad} over the full angular
interval and select the first stationary-neutral root.  Substitution of the
interior stationary point into $F_G(0)=0$, division by
$\lambda_2^2$, and then $\lambda_2\to+\infty$ gives
\begin{equation}
\alpha r\lambda_1^2-2\alpha b_G\lambda_1-b_G^2=0.                        \label{eq:Slimitquadratic}
\end{equation}
The two algebraic roots are
\[
\lambda_\pm
=\frac{b_G\{\alpha\pm\sqrt{\alpha(\alpha+r)}\}}{r\alpha}.
\]
The plus root has positive $\lambda_1$ and gives
$x_\infty=-\alpha\lambda_+/b_G<0$, outside the angular interval.
The admissible negative branch is therefore
\begin{equation}
\lambda_{1,c}^{(\infty)}
=\frac{b_G\{\alpha-\sqrt{\alpha(\alpha+r)}\}}{r\alpha},\qquad
\sin^2\theta_c^{(\infty)}
=-\frac{\alpha\lambda_{1,c}^{(\infty)}}{b_G}
=\frac{1}{1+\sqrt{1+r/\alpha}}.                                        \label{eq:Slimit}
\end{equation}
This root selection uses both negative critical depth and
$0<x_\infty<1$; constitutive stability alone would not discard the plus
root.

At the reference values,
\begin{equation}
\lambda_{1,c}^{(\infty)}
=\frac85(1-\sqrt6)
=-2.319183588\ldots,\qquad
\theta_c^{(\infty)}=32.576263^\circ.                                    \label{eq:Slimitref}
\end{equation}
The principal-axis neutral formula is
\begin{equation}
\lambda_{1,c}^{\rm principal}
=-\frac{b_G+\alpha\lambda_2}{r+\alpha}
=-\frac{\lambda_2+8}{6}\longrightarrow-\infty.                          \label{eq:Sprincipalformal}
\end{equation}
For the reference family it crosses the coercivity boundary
$\lambda_1=-4$ once $\lambda_2>16$; beyond that point it is only a formal
principal-axis neutral extrapolation, not a stable-sector threshold.

\begin{table}[t]
\centering
\small
\begin{tabular}{lccc}
\toprule
changed parameter & fixed parameters & $\lambda_{1,c}$ & $\theta_c$ \\
\midrule
$\alpha=0.05$ & $r=0.5,\ h=0,\ \lambda_2=20$ & $-3.204690824$ & $18.20736^\circ$ \\
$\alpha=0.20$ & $r=0.5,\ h=0,\ \lambda_2=20$ & $-1.035156935$ & $35.76695^\circ$ \\
$r=0.10$      & $\alpha=0.1,\ h=0,\ \lambda_2=20$ & $-3.293505963$ & $39.75743^\circ$ \\
$r=2.00$      & $\alpha=0.1,\ h=0,\ \lambda_2=20$ & $-1.125592017$ & $17.63813^\circ$ \\
$h=0.50$      & $\alpha=0.1,\ r=0.5,\ \lambda_2=20$ & $-3.454808722$ & $29.02037^\circ$ \\
$h=2.00$      & $\alpha=0.1,\ r=0.5,\ \lambda_2=20$ & $-6.886291608$ & $25.00838^\circ$ \\
\bottomrule
\end{tabular}
\caption{Representative full-angle planar stationary-neutral calculations.
They demonstrate parameter dependence and the persistence of interior
selection in these cases, not universal oblique selection.}
\label{tab:Srobust}
\end{table}

\begin{table}[t]
\centering
\small
\begin{tabular}{rcccc}
\toprule
$\lambda_2$ & $\lambda_{1,c}$ &
$|\lambda_{1,c}-\lambda_{1,c}^{(\infty)}|$ &
$\theta_c$ & $|\theta_c-\theta_c^{(\infty)}|$ \\
\midrule
$10^3$ & $-2.31636498151$ & $2.819\times10^{-3}$ &
$32.53167994^\circ$ & $4.458\times10^{-2}\,{}^\circ$ \\
$10^6$ & $-2.31918076332$ & $2.825\times10^{-6}$ &
$32.57621839^\circ$ & $4.460\times10^{-5}\,{}^\circ$ \\
$10^7$ & $-2.31918330594$ & $2.825\times10^{-7}$ &
$32.57625853^\circ$ & $4.460\times10^{-6}\,{}^\circ$ \\
\bottomrule
\end{tabular}
\caption{High-precision checks of the analytic limit
\eqref{eq:Slimitref}.  The sequence is a verification of the formula, not
the proof that the limit is finite.}
\label{tab:Slimitcheck}
\end{table}

\section{Recovery, tangent coercivity and the coupled boundary}
\label{sup:boundaries}

The three boundaries below answer different questions.  The relaxation basin
concerns the nonlinear base trajectory, tangent coercivity concerns the
uncoupled constitutive linearization, and $1+r\min_\theta\chi_G=0$ concerns
the momentum-coupled Fourier symbol.

Under Oldroyd-B relaxation alone,
\[
\dot\A=-(\A-\I)/\lambda.
\]
The right-hand side commutes with $\A$, so the eigenframe is fixed and
\begin{equation}
a_i(t)=1+\{a_i(0)-1\}e^{-t/\lambda}.                                     \label{eq:Sobrelax}
\end{equation}
When diffusion is retained, the complete two-dimensional determinant balance
is
\begin{equation}
\frac{D\det\A}{Dt}
=2(\nabla\boldsymbol\cdot\bm u)\det\A
-\frac{2}{\lambda}\det\A+\frac{\tr\A}{\lambda}
+\kappa\,\operatorname{adj}(\A):\Delta\A .                              \label{eq:Sdetfull}
\end{equation}
The last term vanishes for a uniform state or $\kappa=0$, but has no fixed
sign in general.  The coefficient $2/\lambda$ in this scalar balance is not
an eigenvalue crossing rate.

For Giesekus relaxation alone, each eigenvalue satisfies
\begin{equation}
\frac{\dd a}{\dd(t/\lambda)}
=-(a-1)\{1+\alpha(a-1)\}.                                                \label{eq:Sgiescalar}
\end{equation}
For $\alpha>0$,
\begin{equation}
a(t)-1
=\frac{(a_0-1)e^{-t/\lambda}}
{1+\alpha(a_0-1)(1-e^{-t/\lambda})}.                                    \label{eq:Sgiesolution}
\end{equation}
The basin of attraction of $a=1$ is
\begin{equation}
a_0>1-\frac1\alpha.                                                       \label{eq:Sbasin}
\end{equation}
Equality is the second equilibrium; below it the denominator in
\eqref{eq:Sgiesolution} vanishes at finite positive time and the solution
blows down rather than returning to one.

The Fourier-shifted tangent rates in an eigenbasis are
\begin{equation}
\gamma_{ij}=1+h+\alpha(\lambda_i+\lambda_j-2).                            \label{eq:Stangentrates}
\end{equation}
The smallest is $p_1=1+h-2\alpha+2\alpha\lambda_1$, so tangent coercivity
requires
\begin{equation}
\lambda_1>\lambda_1^*
=1-\frac{1+h}{2\alpha}.                                                  \label{eq:Stangentboundary}
\end{equation}

\begin{lemma}[the coupled surface precedes coercivity in the registered sector]
\label{lem:Sordering}
Fix $\lambda_2\geq0$, $r>0$, and
$0<\alpha<(1+h)/2$.  As $\lambda_1$ decreases within the coercive sector,
the first coupled stationary-neutral contact occurs at a value
$\lambda_{1,c}>\lambda_1^*$.  For small $r$,
\begin{equation}
\lambda_{1,c}-\lambda_1^*
=\frac{r(1+h-2\alpha)}{8\alpha^2}+O(r^2),\qquad
\sin^2\theta_c\longrightarrow\frac12.                                  \label{eq:Smarginlead}
\end{equation}
\end{lemma}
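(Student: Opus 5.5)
The plan is to work directly with the directional margin
$\mathcal M(\lambda_1)=1+r\min_{x\in[0,1]}\chi_G$, with $x=\sin^2\theta$. By \eqref{eq:Sgiechi} and \eqref{eq:Sgiequad},
\[
\chi_G=\frac{PK}{p_1gp_2},\qquad PK=(p_1+2\alpha\Delta x)(\lambda_1p_2+b_G\Delta x),
\]
where the denominator does not depend on $x$. On the coercive sector $\lambda_1^*<\lambda_1\leq\lambda_2$ all of $p_1,g,p_2$ are positive. So $\mathcal M$ is continuous there, being the minimum of a jointly continuous function over a compact interval. The argument then has two parts: an intermediate-value argument for the ordering, and a rescaled perturbation expansion for the small-$r$ asymptotics.

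\textbf{Ordering.} First, at the top of the interval: for $\lambda_1\geq0$ we have $\chi_G=H(0)\geq0$ (the SPD sector of \S\ref{sup:gieproof}), so $\mathcal M>0$ there. Next, the bottom of the interval, $\lambda_1\downarrow\lambda_1^*$. Note $\lambda_1^*<0$ because $2\alpha<1+h$. Since $\lambda_2\geq0$, we get $p_2=b_G+2\alpha\lambda_2\geq p_1$ bounded below, and $K(0)=\lambda_1p_2<0$. Fix any small $x_0>0$ with $K(x_0)<0$. Then $P(x_0)K(x_0)\to2\alpha\Delta x_0K(x_0)<0$, a fixed negative number, while the denominator $p_1gp_2\to0^+$. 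Hence $\min\chi_G\to-\infty$ and $\mathcal M<0$ near $\lambda_1^*$. By continuity, the set of zeros of $\mathcal M$ in $(\lambda_1^*,0)$ is non-empty and closed in the open interval. Its supremum is the first contact reached while decreasing $\lambda_1$, and it is strictly larger than $\lambda_1^*$ because $\mathcal M<0$ on a right neighbourhood of $\lambda_1^*$. The strict-stability hypothesis of Proposition~\ref{prop:Scriterion} holds along the whole path up to this point, so the contact is a genuine coupled neutral point.

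\textbf{Asymptotics.} Write $\lambda_1=\lambda_1^*+\varepsilon$, so that $p_1=2\alpha\varepsilon$. At $\varepsilon=0$ the identities $b_G=-2\alpha\lambda_1^*$ and $p_2=2\alpha\Delta$ hold, and they collapse the numerator to
\[
PK=4\alpha^2\lambda_1^*\Delta^2x(1-x)+O(\varepsilon).
\]
This is minimized at $x=1/2$, with value $\alpha^2\lambda_1^*\Delta^2$. At the same order $g=\alpha\Delta+O(\varepsilon)$, so the denominator is $2\alpha^2\Delta^2p_1\,(1+O(\varepsilon))$. Therefore
\[
\min_x\chi_G=\frac{\lambda_1^*}{2p_1}\,\bigl(1+O(\varepsilon)\bigr).
\]
The neutral condition $1+r\min\chi_G=0$ becomes $p_1=-r\lambda_1^*/2+O(r\varepsilon)$, which gives
\[
\varepsilon=\frac{-r\lambda_1^*}{4\alpha}=\frac{r\,b_G}{8\alpha^2}.
\]
This is the stated leading term. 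The interior stationary point \eqref{eq:Sxstar} tends to $1/2$ as $\varepsilon\to0$, so $\sin^2\theta_c\to1/2$.

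\textbf{Main obstacle.} The main obstacle is making the $O(r^2)$ remainder rigorous and checking that the zero found is the first one. For the remainder, I would set $\varepsilon=r\eta$ and multiply the neutral equation, with $x=x_*(\varepsilon)$ substituted, by $p_1/r$. The result is a smooth equation $\Phi(\eta,r)=0$ with $\partial_\eta\Phi\neq0$ at $(\eta_0,0)$, and the implicit function theorem then gives $\eta=\eta_0+O(r)$. For the first-contact question, for small $r$ the margin is positive whenever $p_1\gg r$, so the only zero lies in the $O(r)$ layer. Within that layer $\mathcal M$ is strictly monotone in $\eta$ by the same rescaling, so there is a unique contact.
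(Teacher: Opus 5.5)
Your proposal is correct and follows essentially the same route as the paper. The ordering comes from an intermediate-value argument: the margin is positive for $\lambda_1\geq0$, while $\min_\theta\chi_G\sim\lambda_1^*/(2p_1)\to-\infty$ as $p_1\to0^+$. The asymptotics come from solving $1+r\lambda_1^*/(2p_1)=0$, with the minimizer tending to $x=1/2$. Beyond the paper's sketch, your $O(r)$-layer localization of all zeros and your implicit-function step for the $O(r^2)$ remainder supply details that the paper leaves implicit.
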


\begin{proof}
At fixed $\lambda_2\geq0$, let $\lambda_1\downarrow\lambda_1^*$, so
$p_1\to0^+$.  With
$\Delta_*=\lambda_2-\lambda_1^*>0$, the exact angular identities give
\[
P=2\alpha\Delta_*x,\quad
K=2\alpha\Delta_*(1-x)\lambda_1^*,\quad
p_2=2\alpha\Delta_*,\quad g=\alpha\Delta_* .
\]
Consequently,
\begin{equation}
\lim_{p_1\to0^+}p_1\chi_G
=2x(1-x)\lambda_1^*.                                                     \label{eq:Schiresidue}
\end{equation}
Because $\lambda_1^*<0$, the angular minimum occurs asymptotically at
$x=1/2$ and
$\min_\theta\chi_G\sim\lambda_1^*/(2p_1)\to-\infty$.  At $\lambda_1=0$,
$K=b_G\lambda_2x\geq0$, so the coupled margin is positive.  Continuity
therefore places its first zero strictly above $\lambda_1^*$.  Solving
$1+r\lambda_1^*/(2p_{1,c})=0$ and using
$p_1=2\alpha(\lambda_1-\lambda_1^*)$ gives
\eqref{eq:Smarginlead}.
\end{proof}

The restriction $\lambda_2\geq0$ is part of the statement; the older,
broader condition $\lambda_2\geq\lambda_1$ is not used.  The remaining two
boundaries obey the exact difference
\begin{equation}
\lambda_1^*-\left(1-\frac1\alpha\right)
=\frac{1-h}{2\alpha}.                                                    \label{eq:Sboundarydiff}
\end{equation}
Hence, for $h<1$, decreasing $\lambda_1$ meets the coupled surface, then
coercivity, then the relaxation-basin boundary.  At $h=1$ the last two
coincide.  For $h>1$ the basin lies above coercivity, and its order relative
to the coupled surface is not universal.  Comparing
\eqref{eq:Smarginlead} with \eqref{eq:Sboundarydiff} gives only the
small-$r$ crossover estimate
$r\sim4\alpha(h-1)/(1+h-2\alpha)$; it is not a global ordering formula.

\section{Amplification on freely relaxing backgrounds}
\label{sup:gain}

\subsection{Exact Oldroyd-B propagator}

Let $\tau=t/\lambda$, $q=1+h$ and consider a spatially uniform
Oldroyd-B base with no velocity gradient.  Along a fixed principal
direction,
\[
a(\tau)=1+(a_0-1)e^{-\tau}.
\]
In the Stokes limit, $U=-rw$, and the shear perturbation obeys
\begin{equation}
\frac{\dd w}{\dd\tau}=-\{q+ra(\tau)\}w.                                  \label{eq:Swode}
\end{equation}
Thus any non-zero amplitude proportional to $w$ satisfies
\begin{equation}
\log\frac{|w(\tau)|}{|w(0)|}
=r(1-a_0)(1-e^{-\tau})-(r+q)\tau.                                      \label{eq:Sgainhistory}
\end{equation}
Define
\[
Z=\frac{r(1-a_0)}{r+q}.
\]
If $Z\leq1$, the maximum is the initial amplitude.  If $Z>1$, the peak is
at $\tau_*=\log Z$ and
\begin{equation}
\log G_{\max}=(r+q)(Z-1-\log Z).                                        \label{eq:Sloggain}
\end{equation}
For $a_0<0$, recovery occurs at
$\tau_{\rm heal}=\log(1-a_0)$, so
\begin{equation}
\tau_{\rm heal}-\tau_*=\log\frac{r+q}{r},                                \label{eq:Sgainlag}
\end{equation}
independent of the initial depth.  For a prescribed $G_c>1$, put
$C=\log G_c/(r+q)$.  The branch $Z>1$ is
\begin{equation}
Z_c=-W_{-1}(-e^{-1-C}),\qquad
a_{0,c}=1-\frac{r+q}{r}Z_c.                                             \label{eq:SLambert}
\end{equation}
In the diffusion-free small-solvent convention
$r=(1-\beta)/\beta$, $q=1$, and for $a_0=-d$ with $d,\beta\ll1$,
\begin{equation}
\log G_{\max}
=\frac{(d-\beta)^2}{2\beta}
+o\!\left(\frac{(d-\beta)^2}{\beta}\right).                             \label{eq:Snear}
\end{equation}
These formulas concern a uniform single Fourier mode, not a spatial global
mode or an inhomogeneous-flow energy norm.

\subsection{Registered Giesekus coordinate-norm calculation}
\label{sup:giesgain}

For Giesekus, retain the Stokes restriction $h=\delta=0$ used in the
registered calculation.  The base eigenframe remains fixed and
\begin{equation}
\lambda_i(\tau)-1
=\frac{\{\lambda_i(0)-1\}e^{-\tau}}
{1+\alpha\{\lambda_i(0)-1\}(1-e^{-\tau})}.                              \label{eq:Sgielamtime}
\end{equation}
Choose one wave direction at $\tau=0$ and keep that direction fixed relative
to the eigenframe.  With $X=(x,w,y)^\T$, the exact time-dependent Stokes
system is
\begin{equation}
\frac{\dd X}{\dd\tau}=\bm M(\tau)X,\qquad
\bm M(\tau)=
\begin{pmatrix}
-P&-2\alpha c&0\\
-\alpha c&-(g+ra)&-\alpha c\\
0&-2c(\alpha+r)&-Q
\end{pmatrix}_{\A_0=\A_0(\tau)}.                                       \label{eq:SgieLTV}
\end{equation}
Let $\bm\Phi'=\bm M(\tau)\bm\Phi$, $\bm\Phi(0)=\I$.  The reported quantity is
\begin{equation}
G(\tau)=\|\bm\Phi(\tau,0)\|_2,\qquad
G_{\rm pre}=\max_{0\leq\tau\leq\tau_{\rm heal}}G(\tau),                 \label{eq:Sgiegain}
\end{equation}
where the norm is the coordinate Euclidean norm of $(x,w,y)$, not a
coordinate-invariant energy norm.  For $\lambda_1(0)<0$,
\begin{equation}
\tau_{\rm heal}
=\log\frac{(1-\alpha)\{1-\lambda_1(0)\}}
{1-\alpha+\alpha\lambda_1(0)}.                                         \label{eq:Sgiehealtime}
\end{equation}

The archived computation used DOP853 with relative tolerance $10^{-10}$,
absolute tolerance $10^{-12}$, chunked state-transition renormalization and
integration to $\tau_{\rm heal}+4$ to confirm post-peak decay.  At each
registered state the oblique direction is the initial static-compliance
minimizer; it is not re-optimized for transient gain.

\begin{table}[t]
\centering
\small
\begin{tabular}{ccccccc}
\toprule
$\lambda_1(0)$ & direction & $\theta_0$ & frozen $z$ &
$G_{\rm pre}$ & $\tau_{\rm peak}$ & $\tau_{\rm heal}$ \\
\midrule
$-1.95$ & principal & $0^\circ$ & $-0.13000000$ & $1.0000000$ & $0$ & $1.326002$ \\
$-1.95$ & initial $\chi$ minimum & $24.2098^\circ$ & $0.00801290$ &
$1.1471359$ & $0.421826$ & $1.326002$ \\
$-2.00$ & principal & $0^\circ$ & $-0.10000000$ & $1.0000000$ & $0$ & $1.349927$ \\
$-2.00$ & initial $\chi$ minimum & $24.9974^\circ$ & $0.03137917$ &
$1.1671697$ & $0.443360$ & $1.349927$ \\
\bottomrule
\end{tabular}
\caption{Archived coordinate $2$-norm gains for
$\alpha=0.1$, $r=0.5$, $h=\delta=0$ and $\lambda_2(0)=5$.
The principal value $1$ is asserted only for these two registered states.}
\label{tab:Sgiesgain}
\end{table}

These values are supporting, partial evidence.  The archived Oldroyd-B
degeneration agrees with \eqref{eq:Sloggain} to relative errors between
$9.6\times10^{-9}$ and $3.6\times10^{-7}$, but an independent RK45
recalculation of the two target values $1.1471359$ and $1.1671697$ has not
been permanently archived.  Accordingly Table~\ref{tab:Sgiesgain} is not
used as a main-text claim and must not be described as energy amplification,
a safety bound, a universal principal-axis monotonicity result, or a DNS
consequence.  The source table and algorithm are available from the
corresponding author upon reasonable request.

\section{Three-dimensional eigenplane blocks and numerical full-sphere tests}
\label{sup:threeD}

\subsection{Exact result for a prescribed eigenplane}
\label{sup:threeDexact}

Let
\[
\A_0=\operatorname{diag}(\lambda_1,\lambda_2,\lambda_3)
\]
and constrain $\khat$ to the $(\e_1,\e_2)$ eigenplane.  With
$\that$ the in-plane transverse direction, the wave-frame base has the
block
\[
\begin{pmatrix}a&c&0\\c&d&0\\0&0&\lambda_3\end{pmatrix}.
\]
The three-dimensional velocity has an in-plane polarization $\that$ and an
out-of-plane polarization $\e_3$.  The full symbol separates exactly into:
\begin{enumerate}
\item the $(U,x,w,y)$ in-plane block of
      \eqref{eq:Sgiex}--\eqref{eq:Sgiemom}, identical entry by entry to the
      two-dimensional $(\lambda_1,\lambda_2)$ problem and independent of
      $\lambda_3$;
\item an out-of-plane block for
      $U_o$, $m=\delta A_{3k}$ and $n=\delta A_{3t}$;
\item the scalar $\delta A_{33}$ spectator with pole
      $-p_3=-(b_G+2\alpha\lambda_3)$.
\end{enumerate}
The out-of-plane block is
\begin{align}
(1+\delta z)U_o&=-rm,                                                     \label{eq:Soutmom}\\
(z+P_3)m+\alpha cn&=aU_o,                                                 \label{eq:Soutm}\\
\alpha cm+(z+Q_3)n&=cU_o,                                                 \label{eq:Soutn}
\end{align}
where
\[
P_3=b_G+\alpha(a+\lambda_3),\qquad
Q_3=b_G+\alpha(d+\lambda_3).
\]
Its characteristic polynomial is
\begin{equation}
F_{\rm out}(z)
=(1+\delta z)\{(z+P_3)(z+Q_3)-\alpha^2c^2\}
+r\{a(z+Q_3)-\alpha c^2\}.                                              \label{eq:Soutpoly}
\end{equation}
Define $g_{ij}=b_G+\alpha(\lambda_i+\lambda_j)$.  At zero frequency,
\begin{equation}
\chi_{\rm out}(\theta)
=\frac{aQ_3-\alpha c^2}{P_3Q_3-\alpha^2c^2}
=\frac{\lambda_1}{g_{13}}\cos^2\theta
+\frac{\lambda_2}{g_{23}}\sin^2\theta.                                  \label{eq:Soutchi}
\end{equation}
In the strict constitutive sector this is linear and non-decreasing in
$\sin^2\theta$, so its minimum is the principal direction $\khat=\e_1$.
The exact additional eigenplane branch is neutral at
\begin{equation}
\lambda_1
=-\frac{b_G+\alpha\lambda_3}{r+\alpha}.                                  \label{eq:Soutthreshold}
\end{equation}
Permuting the axes gives an exact copy of the corresponding two-dimensional
pair block in every prescribed eigenplane.  Therefore the
three-dimensional spectrum contains all such planar branches and cannot be
more stable than those branches predict.  This inclusion statement does not
say that an arbitrary global minimizer over the wavevector sphere must lie
in an eigenplane.

At $\alpha=0.1$, $r=0.5$, $h=0$, $\lambda_2=5$ and the natural embedding
$\lambda_3=1$, \eqref{eq:Soutthreshold} gives the exact branch
\begin{equation}
\lambda_1=-\frac32=-1.5,                                                 \label{eq:Soutref}
\end{equation}
which precedes the planar oblique value $-1.933191930$.  Thus the latter is
not the global onset of this three-dimensional embedding.

\subsection{Finite full-sphere searches: numerical evidence only}
\label{sup:threeDnumeric}

The full-sphere calculations used an octant $91\times91$ angular mesh with
four local refinements and compared the numerical sphere minimum with the
lowest threshold predicted by the tested eigenvalue-pair branches.  The
seven parameter sets below are
\[
\begin{array}{c|ccccc}
\text{code}&\alpha&r&h&\lambda_2\\ \hline
R&0.1&0.5&0&5\\
A05&0.05&0.5&0&5\\
A20&0.2&0.5&0&5\\
R2&0.1&2&0&5\\
L10&0.1&0.5&0&10\\
H05&0.1&0.5&0.5&5\\
A30&0.3&1&0&8
\end{array}
\]
For each set, the labels $1$, mid and top mean
$\lambda_3=1$, $\lambda_3=(\lambda_1+\lambda_2)/2$ evaluated
self-consistently at threshold, and $\lambda_3=\lambda_2$, respectively.
The branch classification in Table~\ref{tab:S3Dsphere} comes from a probe
$0.02$ beyond the pairwise onset.  It is not a reported critical angle.

\begingroup
\small
\setlength{\LTleft}{0pt}
\setlength{\LTright}{0pt}
\begin{longtable}{@{}llrrrl@{}}
\caption{Finite full-sphere threshold searches.  The close agreement is
numerical evidence consistent with pairwise selection; it is not an exact
global three-dimensional theorem.}
\label{tab:S3Dsphere}\\
\toprule
set & $\lambda_3$ choice & pair prediction & sphere estimate &
$|\Delta|$ & probe branch \\
\midrule
\endfirsthead
\toprule
set & $\lambda_3$ choice & pair prediction & sphere estimate &
$|\Delta|$ & probe branch \\
\midrule
\endhead
R   & $1$   & $-1.500000000$ & $-1.50000000$ & $0$ & principal $e_1/e_3$ \\
R   & mid   & $-1.615384615$ & $-1.61538462$ & $1.6{\times}10^{-11}$ & principal $e_1/e_3$ \\
R   & top   & $-1.933191930$ & $-1.93319193$ & $5.8{\times}10^{-11}$ & degenerate ring \\
A05 & $1$   & $-1.727272727$ & $-1.72727273$ & $2.7{\times}10^{-11}$ & principal $e_1/e_3$ \\
A05 & mid   & $-1.782608696$ & $-1.78260870$ & $4.8{\times}10^{-11}$ & principal $e_1/e_3$ \\
A05 & top   & $-2.090909091$ & $-2.09090909$ & $9.1{\times}10^{-12}$ & principal, degenerate \\
A20 & $1$   & $-0.943756960$ & $-0.94375698$ & $2.2{\times}10^{-8}$ & oblique $e_1$--$e_3$ \\
A20 & mid   & $-0.977257101$ & $-0.97725719$ & $8.4{\times}10^{-8}$ & oblique $e_1$--$e_3$ \\
A20 & top   & $-1.010746251$ & $-1.01074625$ & $2.1{\times}10^{-12}$ & degenerate ring \\
R2  & $1$   & $-0.428571429$ & $-0.42857143$ & $2.9{\times}10^{-11}$ & principal $e_1/e_3$ \\
R2  & mid   & $-0.488372093$ & $-0.48837209$ & $2.3{\times}10^{-11}$ & principal $e_1/e_3$ \\
R2  & top   & $-0.619047619$ & $-0.61904762$ & $4.8{\times}10^{-11}$ & principal, degenerate \\
L10 & $1$   & $-1.500000000$ & $-1.50000000$ & $0$ & principal $e_1/e_3$ \\
L10 & mid   & $-1.876438017$ & $-1.87643822$ & $2.0{\times}10^{-7}$ & oblique $e_1$--$e_3$ \\
L10 & top   & $-2.089854771$ & $-2.08985477$ & $5.7{\times}10^{-12}$ & degenerate ring \\
H05 & $1$   & $-2.333333333$ & $-2.33333333$ & $3.3{\times}10^{-11}$ & principal $e_1/e_3$ \\
H05 & mid   & $-2.384615385$ & $-2.38461538$ & $1.5{\times}10^{-11}$ & principal $e_1/e_3$ \\
H05 & top   & $-2.917900918$ & $-2.91790092$ & $6.9{\times}10^{-11}$ & degenerate ring \\
A30 & $1$   & $-0.394446668$ & $-0.39444675$ & $8.1{\times}10^{-8}$ & oblique $e_1$--$e_3$ \\
A30 & mid   & $-0.419699669$ & $-0.41969973$ & $5.7{\times}10^{-8}$ & oblique $e_1$--$e_3$ \\
A30 & top   & $-0.426173065$ & $-0.42617307$ & $6.1{\times}10^{-11}$ & degenerate ring \\
\bottomrule
\end{longtable}
\endgroup

Across these 21 threshold rows the largest difference is
$2.1\times10^{-7}$, and a denser reference sweep had a maximum
sphere-versus-pairwise gap of $6.3\times10^{-7}$.  A separate direct
growth-rate check agreed in sign with the static-compliance prediction in
399 of 399 random draws.  In particular, the $A20$ and $A30$ natural
$\lambda_3=1$ embeddings had numerically leading oblique
$e_1$--$e_3$ branches.  These finite searches are consistent with
eigenplane pairwise selection and show that obliquity is not confined to
the original plane.  They do not prove a general pairwise reduction, an
if-and-only-if condition involving
$\min(\lambda_2,\lambda_3)$, a universal mobility range, or a global
critical angle.

\section{Verification records and claim boundary}
\label{sup:checks}

The symbolic proof and the finite numerical checks have different logical
roles.  Exact identities establish the statements in
Proposition~\ref{prop:Scriterion}; finite root scans are regression tests of
their implementation.  Direct matrices test the printed symbol through an
independent assembly.  Uniform forced-base pseudospectral calculations test
the time integrator and growth-rate extraction, but are not DNS of an
inhomogeneous flow.

\begin{table}[t]
\centering
\small
\begin{tabular}{rcccc}
\toprule
$\lambda_2$ & analytic $\lambda_{1,c}$ & measured $\lambda_{1,c}$
& stable-side $z$ & unstable-side $z$ \\
\midrule
5    & $-1.93319193$ & $-1.93308397$ & $-0.02390682$ & $+0.02401028$ \\
20   & $-2.19260442$ & $-2.19255008$ & $-0.01921742$ & $+0.01925924$ \\
100  & $-2.29157250$ & $-2.29155928$ & $-0.01765858$ & $+0.01766792$ \\
1000 & $-2.31636498$ & $-2.31635562$ & $-0.01728632$ & $+0.01729279$ \\
\bottomrule
\end{tabular}
\caption{Uniform forced-base pseudospectral brackets.  The stable state is
$\lambda_{1,c}+0.05$ and the unstable state is
$\lambda_{1,c}-0.05$.  The measured neutral value is a two-rate
interpolation, not a global-flow critical point.}
\label{tab:Spdesat}
\end{table}

The 35 directly assembled matrix crossings differ from the analytic
thresholds by at most $4.73\times10^{-13}$.  Across their neutral states and
$\delta=0,0.01,1$, the largest absolute leading root is
$6.62\times10^{-14}$.  In the four forced-base brackets of
Table~\ref{tab:Spdesat}, all eight stable/unstable signs are correct; the
largest interpolated-threshold error is $1.08\times10^{-4}$ and the largest
rate error relative to the direct matrix is $2.62\times10^{-4}$.

\begingroup
\small
\setlength{\LTleft}{0pt}
\setlength{\LTright}{0pt}
\begin{longtable}{@{}p{0.18\textwidth}p{0.25\textwidth}p{0.25\textwidth}p{0.25\textwidth}@{}}
\caption{Verification hierarchy and claim boundary.  Paths are relative to
the submission root.}
\label{tab:Sverification}\\
\toprule
record & registered result & supports & does not support \\
\midrule
\endfirsthead
\toprule
record & registered result & supports & does not support \\
\midrule
\endhead
L02 permanent symbolic replay &
15 symbolic or degeneracy identities have exact residual zero; finite scans
contain 1200 Oldroyd-B, 1600 FENE-P and 2400 Giesekus root cases, plus 240
Giesekus simple-neutral samples, with zero failures. &
Algebraic implementation of the model-specific proof and its reduced
branches. &
The finite scans alone are not a proof and do not cover arbitrary
constitutive laws or three-dimensional wavevectors.\\[2pt]

FENE-P convention checks &
Exact trace--shear closure and determinant residuals are zero; 400 of 400
selected eigenvalue tests agree with the sign of $1+r\chi_F$. &
The matched equilibrium-normalized convention and the printed planar
closure. &
An arbitrary scalar spring with $g_0<0$ or unmatched stress and relaxation
factors.\\[2pt]

Direct Giesekus matrices &
35 neutral thresholds agree within $4.73\times10^{-13}$; neutral roots across
three inertia values are below $6.62\times10^{-14}$ in magnitude. &
Independent assembly of the frozen finite-dimensional symbol and inertia
independence of the neutral surface in those cases. &
A nonlinear-flow, global-mode or residence-time result.\\[2pt]

Uniform forced-base brackets &
Eight of eight signs pass; maximum interpolated-threshold error
$1.08\times10^{-4}$. &
The uniform pseudospectral implementation and the analytic growth-rate sign. &
Four-roll-mill DNS, spontaneous base-state formation or nonlinear
saturation.\\[2pt]

T4 one-factor controls &
Across four anchors, the largest grid, time-step or seed drift relative to
$|z_{\rm th}|$ is $7.71\times10^{-4}$, below the registered $2\%$ gate. &
Local numerical sensitivity at those anchors. &
Global grid independence or a combined multi-factor error bound.\\[2pt]

T6 diffusion mapping &
Two $(k,\kappa)$ pairs with the same $h=0.5$ differ in measured rate by at
most $1.9\times10^{-4}$ relatively; the largest theory error is
$2.8\times10^{-4}$. &
Dependence on the Fourier diffusion shift $h=\kappa\lambda k^2$ in the tested
uniform cases. &
Cross-wavenumber global first-mode selection.\\[2pt]

Three-dimensional checks &
The exact prescribed-eigenplane block is verified symbolically; 21 finite
sphere thresholds are within $2.1\times10^{-7}$ of tested pair predictions. &
Exact eigenplane branch inclusion, plus finite numerical evidence consistent
with pairwise selection. &
An exact global three-dimensional pairwise theorem or a universal critical
angle.\\[2pt]

Giesekus gain archive &
The DOP853 archive gives the four rows in Table~\ref{tab:Sgiesgain}; the
Oldroyd-B degeneration is reproduced. &
The stated coordinate-norm calculation on the specified freely relaxing
background. &
Independent confirmation of the two target anchors, an energy norm, a safety
certificate or a DNS consequence.\\
\bottomrule
\end{longtable}
\endgroup

The machine-readable outputs and scripts supporting the registered checks in
these appendices are available from the corresponding author upon reasonable
request. No calculation here shows
that a non-positive conformation state is physically realizable.  The
claimed object remains the linear, uniform, single-mode symbol and its
registered supporting checks.

\end{document}